\theoremstyle{plain}
\newtheorem{theorem}{Theorem}
\newtheorem{claim}[theorem]{Claim}
\newtheorem*{claim*}{Claim}
\newtheorem{lemma}[theorem]{Lemma}
\newtheorem*{lemma*}{Lemma}
\newtheorem*{prop*}{Proposition}
\newtheorem*{theorem*}{Theorem}
\newtheorem{defn}[theorem]{Definition}
\newtheorem*{defn*}{Definition}
\newtheorem*{convention*}{Convention}
\newcommand{\bedit}[1]{\Authoredit{magenta}{#1}}
\newcommand{\bedit}[1]{#1}
\def\OPT{\mathsf{OPT}}
\definecolor{bleudefrance}{rgb}{0.19, 0.55, 0.91}
\definecolor{blizzardblue}{rgb}{0.67, 0.9, 0.93}
\newcommand{\E}[1]{{\mathbb{E}}\left[#1\right]}
\renewcommand{\P}[1]{{\mathbb{P}}\left[#1\right]}
\begin{document}

\title{A Lower Bound for the Max Entropy Algorithm for TSP}
\iftoggle{notblinded}{
\author{Billy Jin\\Purdue University \and Nathan Klein\\Boston University \and David P.\ Williamson\\Cornell University}
}
{\author{}}


\maketitle

\begin{abstract}{One of the most famous conjectures in combinatorial optimization is the four-thirds conjecture, which states that the integrality gap of the Subtour LP relaxation of the TSP is equal to $\frac43$. For 40 years, the best known upper bound was $1.5$, due to 
Wolsey \cite{Wolsey80}.  Recently, Karlin, Klein, and Oveis Gharan \cite{KKO21b} showed that the max entropy algorithm for the TSP gives an improved bound of $1.5 - 10^{-36}$.
In this paper, we show that the approximation ratio of the max entropy algorithm  is at least 1.375, even for graph TSP.
Thus the max entropy algorithm does not appear to be the algorithm that will ultimately resolve the four-thirds conjecture in the affirmative, should that be possible.}
\end{abstract}


\section{Introduction}

In the traveling salesman problem (TSP), we are given a set of $n$ cities and the costs $c_{ij}$ of traveling from city $i$ to city $j$  for all $i,j$. The goal of the problem is to find the cheapest tour that visits each city exactly once and returns to its starting point.
An instance of the TSP is called {\em symmetric} if $c_{ij} = c_{ji}$ for all $i,j$; it is {\em asymmetric} otherwise.  Costs obey the {\em triangle inequality} (or are {\em metric}) if $c_{ij} \leq c_{ik} + c_{kj}$ for all $i,j,k$.  All instances we consider will be symmetric and obey the triangle inequality. We treat the problem input as a complete graph $G=(V,E)$, where $V$ is the set of cities, and $c_e = c_{ij}$ for edge $e=(i,j)$.

In the mid-1970s, Christofides \cite{Christofides76} and Serdyukov \cite{Serdyukov78} each gave a
$\frac{3}{2}$-approximation algorithm for the symmetric TSP with
triangle inequality.  The algorithm computes a minimum-cost spanning tree and then finds a minimum-cost perfect matching on the odd degree vertices of the tree to compute a connected Eulerian subgraph. Because the edge costs satisfy the triangle inequality, any Eulerian tour of this Eulerian subgraph can be ``shortcut'' to a tour of no greater cost.  Until very recently, this was the best approximation factor known for the symmetric TSP with triangle inequality, although over the last decade substantial progress was made for many special cases and variants of the problem. For example, in {\em graph TSP}, the input to the problem is an unweighted connected graph, and the cost of traveling between any two nodes is the number of edges in the shortest path between the two nodes.  A sequence of papers led to a 1.4-approximation algorithm for this problem due to Seb\H{o} and Vygen \cite{SeboV14}.

In the past decade, a variation on the Christofides-Serdyukov algorithm has been considered. Its starting point is  a well-known linear programming relaxation of the TSP introduced by Dantzig, Fulkerson, and Johnson \cite{DFJ54}, sometimes called the {\em Subtour LP} or the {\em Held-Karp bound} \cite{HeldK71}.  The Subtour LP is as follows:
\begin{equation}
\begin{aligned}
    \min \quad & \sum_{e \in E} c_e x_e \\
    \text{s.t.} \quad & x(\delta(v)) = 2, &\forall \; v \in V, \\
    & x(\delta(S)) \geq 2, & \forall \; S\subset V, S \neq \emptyset, \\
    & 0 \leq x_e \leq 1, &\forall e \in E,
\end{aligned}
\label{LP}
\end{equation}
where $\delta(S)$ is the set of all edges with exactly one endpoint in $S$ and we use the shorthand that $x(F) = \sum_{e \in F} x_e$.   It is not difficult to show that for any  solution $x^*$ of this LP relaxation, $\frac{n-1}{n}x^*$ is a feasible point in the spanning tree polytope.  The spanning tree polytope is known to have integer extreme points, and so $\frac{n-1}{n}x^*$ can be decomposed into a convex combination of spanning trees, and the cost of this convex combination is a lower bound on the cost of an optimal tour.  The convex combination can be viewed as a distribution over spanning trees such that the expected cost of a spanning tree sampled from this distribution is a lower bound on the cost of an optimal tour.  The variation of the Christofides-Serdyukov algorithm considered is one that samples a random spanning tree from a distribution on spanning trees given by the convex combination and then finds a minimum-cost perfect matching on the odd vertices of the tree.  This idea was introduced in work of Asadpour, Goemans, M\k{a}dry, Oveis Gharan, and Saberi \cite{AsadpourGMOS17} (in the context of the asymmetric TSP) and Oveis Gharan, Saberi, and Singh \cite{OveisGharanSS11} (for symmetric TSP).

Asadpour et al.\ \cite{AsadpourGMOS17} and Oveis Gharan, Saberi, and Singh \cite{OveisGharanSS11} consider a particular distribution of spanning trees known as the {\em maximum entropy distribution}.  The maximum entropy algorithm finds a probability distribution $p_T$ on spanning trees $T$ such that the marginal distribution on each edge $e$ is $\frac{n-1}{n}x^*_e$ (that is, $\sum_{T:e\in T} p_T = \frac{n-1}{n} x^*_e$) and that maximizes the entropy function $-\sum_T p_T \log p_T$.    We will call the algorithm that samples from the maximum entropy distribution and then finds a minimum-cost perfect matching on the odd degree vertices of the tree the {\em maximum entropy algorithm} for the symmetric TSP.  For computational feasibility, both Asadpour et al.\ \cite{AsadpourGMOS17} and Oveis Gharan, Saberi, and Singh \cite{OveisGharanSS11} instead draw a tree from an approximation of the maximum entropy distribution by relaxing the marginal constraints so that $\sum_{T:e\in T} p_T \leq (1 + \epsilon) \frac{n-1}{n} x^*_e$, where the runtime of the algorithm depends on $\epsilon$. A sampling algorithm of Asadpour et al. \cite[Theorem 5.2]{AsadpourGMOS17} computes a random spanning tree $T$ from the maximum entropy distribution with the relaxed marginals for any point $z$ in the spanning tree polytope, not just $\frac{n-1}{n}x^*$ with $x^*$ from the Subtour LP.  

In a breakthrough result, Karlin, Klein, and Oveis Gharan \cite{KKO21} show that a variant of the maximum entropy algorithm has performance ratio better than 3/2, although the amount by which the bound was improved is quite small (approximately $10^{-36}$). The achievement of the paper is to show that choosing a random spanning tree from the maximum entropy distribution gives a distribution of odd degree nodes in the spanning tree such that the expected cost of the perfect matching is cheaper (if marginally so) than in the Christofides-Serdyukov analysis.  Note that the Karlin et al.\ algorithm actually samples from the set of {\em 1-trees}, a spanning tree plus one additional edge\footnote{Held and Karp \cite{HeldK71} define a 1-tree to be a tree with two distinct edges incident on a specific vertex plus a spanning tree on the remaining vertices. Our definition is more general.}, and then finds a matching on the odd-degree vertices.



It has long been conjectured that there should be a 4/3-approximation algorithm for the TSP based on rounding the Subtour LP, given other conjectures about the integrality gap of the Subtour LP.  
 The {\em integrality gap} of an LP relaxation is the worst-case ratio of an optimal integer solution to the linear program to the optimal linear programming solution.  Wolsey \cite{Wolsey80} (and later Shmoys and Williamson \cite{ShmoysW90}) showed that the analysis of the Christofides-Serydukov algorithm could be used to show that the integrality gap of the Subtour LP is at most 3/2, and Karlin, Klein, and Oveis Gharan \cite{KKO21b} have shown that the integrality gap is at most $\frac{3}{2}-10^{-36}$. Gurvits, Klein, and Leake improved this slightly to $\frac{3}{2}-10^{-34}$ \cite{GKL24}.  It is known that the integrality gap of the Subtour LP is at least 4/3, due to a set of instances shown in Figure \ref{fig:badexample}.  It has long been conjectured
that the integrality gap is exactly 4/3. Benoit and Boyd \cite{BenoitB08} give a more refined version of the conjecture and confirm that the integrality gap is at most 4/3 when the number of vertices is at most 10; Boyd and Elliott-Magwood \cite{BoydE10} extend this result to 12 vertices.  Until the work of Karlin et al.\ there had been no progress on that conjecture for general $n$ since the work of Wolsey in 1980.    

A reasonable question is whether the maximum entropy algorithm is itself a 4/3-approximation algorithm for the TSP; there is no reason to believe that the Karlin et al. \cite{KKO21} analysis is tight.  Experimental work by Genova and Williamson \cite{GenovaW17} has shown that the max entropy algorithm produces solutions which are very good in practice, much better than those of the Christofides-Serdyukov algorithm.  It does extremely well on instances of graph TSP, routinely producing solutions within 1\% of the value of the optimal solution.

\begin{figure}
\begin{center}
\begin{tikzpicture}[inner sep=1.7pt,scale=.8,pre/.style={<-,shorten <=2pt,>=stealth,thick}, post/.style={->,shorten >=1pt,>=stealth,thick}]	
	\tikzstyle{every node} = [draw,circle,fill=black]
	\node[draw=none,fill=none] at (-1,5) (x1) {};
	\node[draw=none,fill=none] at (-1,1) (x2) {};
	\node at (0,5) (a1) {} ;
	\node at (0,4)  (a2) {};
	\node at (0,3) (a3) {} ;
	\node at (0,2)  (a4) {};
	\node at (0,1) (a5) {} ;
	\node at (1,4.5) (b1)  {};
	\node at (1,4)  (b2) {};
	\node at (1,3) (b3) {};
	\node at (1,2) (b4) {};
	\node at (1,1.5) (b5) {};
	\node at (2,5) (c1) {};
	\node at (2,4)  (c2) {};
	\node at (2,3) (c3) {};
	\node at (2,2) (c4) {};
	\node at (2,1) (c5) {};
	\path (a1) edge  (c1)
	edge   (b1)
	edge  (a2)
	(a2) edge (a3)
	(a3) edge (a4)
	(a4) edge (a5)
	(a5) edge (b5)
	edge (c5)
	(b1) edge (c1)
	edge (b2)
	(b2) edge (b3) 
	(b3) edge (b4)
	(b4) edge (b5)
	(b4) edge (b5)
	(b5) edge (c5)
	(c1) edge (c2)
	(c2) edge (c3)
	(c3) edge (c4)
	(c4) edge (c5);    
		  \path (x1) edge[<->] node[draw=none,fill=none,rectangle,left] {$k$} (x2);
\end{tikzpicture}
\qquad
\begin{tikzpicture}[inner sep=1.7pt,scale=.8,pre/.style={<-,shorten <=2pt,>=stealth,thick}, post/.style={->,shorten >=1pt,>=stealth,thick}]	
	\tikzstyle{every node} = [draw,circle,fill=black]
	\node at (0,5) (a1) {} ;
	\node at (0,4)  (a2) {};
	\node at (0,3) (a3) {} ;
	\node at (0,2)  (a4) {};
	\node at (0,1) (a5) {} ;
	\node at (1,4.5) (b1)  {};
	\node at (1,4)  (b2) {};
	\node at (1,3) (b3) {};
	\node at (1,2) (b4) {};
	\node at (1,1.5) (b5) {};
	\node at (2,5) (c1) {};
	\node at (2,4)  (c2) {};
	\node at (2,3) (c3) {};
	\node at (2,2) (c4) {};
	\node at (2,1) (c5) {};
	\path (a1) edge[dashed]  (c1)
	edge[dashed]   (b1)
	edge  (a2)
	(a2) edge (a3)
	(a3) edge (a4)
	(a4) edge (a5)
	(a5) edge[dashed] (b5)
	edge[dashed] (c5)
	(b1) edge[dashed] (c1)
	edge (b2)
	(b2) edge (b3) 
	(b3) edge (b4)
	(b4) edge (b5)
	(b4) edge (b5)
	(b5) edge[dashed] (c5)
	(c1) edge (c2)
	(c2) edge (c3)
	(c3) edge (c4)
	(c4) edge (c5);
\end{tikzpicture}
\hspace{5mm}
\begin{tikzpicture}[inner sep=1.7pt,scale=.8,pre/.style={<-,shorten <=2pt,>=stealth,thick}, post/.style={->,shorten >=1pt,>=stealth,thick}]	
	\tikzstyle{every node} = [draw,circle,fill=black]
	\node at (0,5) (a1) {} ;
	\node at (0,4)  (a2) {};
	\node at (0,3) (a3) {} ;
	\node at (0,2)  (a4) {};
	\node at (0,1) (a5) {} ;
	\node at (1,4.5) (b1)  {};
	\node at (1,4)  (b2) {};
	\node at (1,3) (b3) {};
	\node at (1,2) (b4) {};
	\node at (1,1.5) (b5) {};
	\node at (2,5) (c1) {};
	\node at (2,4)  (c2) {};
	\node at (2,3) (c3) {};
	\node at (2,2) (c4) {};
	\node at (2,1) (c5) {};
	\path (a1) edge  (a2)
	edge (b1)
	(a2) edge (a3)
	(a3) edge (a4)
	(a4) edge (a5)
	(a5) edge (b5)
	(b1) edge (b2)
	(b2) edge (b3) 
	(b3) edge (b4)
	(b5) edge (c5)
	(c1) edge (c2)
	(c2) edge (c3)
	(c3) edge (c4)
	(c4) edge (c5)
	(c1) edge (b4);
\end{tikzpicture}
\end{center}
\caption{Illustration of the worst example known for the integrality gap for
the symmetric TSP with triangle inequality.  The figure on the
left gives a graph, and costs $c_{ij}$ are the shortest path
distances in the graph.  The figure in the center gives the LP
solution, in which the dotted edges have value 1/2, and the solid
edges have value 1.  The figure on the right gives the optimal
tour.  The ratio of the cost of the optimal tour to the value of the LP solution tends to 4/3 as $k$ increases.} \label{fig:badexample}
\end{figure}

In this paper, we show that the maximum entropy algorithm can produce tours of cost strictly greater than 4/3 times the value of the optimal tour (and thus the Subtour LP), even for instances of graph TSP. In particular, we show:
\begin{theorem}
    There is an infinite family of instances of graph TSP for which the max entropy algorithm outputs a tour of expected cost at least $1.375-o(1)$ times the cost of the optimum solution. 
\end{theorem}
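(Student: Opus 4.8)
The plan is to construct an explicit infinite family of graph TSP instances and to bound from below the two contributions to the algorithm's output: the sampled tree (or $1$-tree) and the minimum-cost perfect matching on its odd-degree vertices. Because the sampled tree has edge marginals $\frac{n-1}{n}x^*_e$, and I will arrange for $x^*$ to be supported on the unit-cost edges of the graph, the expected tree cost is deterministically $n-1$ (the tree uses only cost-$1$ edges); the whole difficulty is therefore to show that the \emph{expected matching cost} is at least $(\tfrac38 - o(1))\,n$. I would choose the graph to be Hamiltonian, so that $\OPT = n$ and the Subtour LP value equals $n$ as well, and I would feed the algorithm a \emph{half-integral} optimal LP solution $x^*$ (an admissible choice, since to lower bound the algorithm we may fix any optimal $x^*$). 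Then $\E{\text{ALG}} \ge (n-1) + (\tfrac38 - o(1))\,n = (\tfrac{11}{8}-o(1))\,n$, giving the claimed ratio $\tfrac{11}{8} = 1.375$ against $\OPT = n$.

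\textbf{Construction and the max entropy distribution.} I would assemble the instance from $\Theta(n)$ copies of a small, fixed gadget, glued cyclically at a constant number of ``port'' vertices, so that the graph is Hamiltonian but the optimal Subtour solution is half-integral (edge values in $\{0,\tfrac12,1\}$) on the gadget interiors. The half-integral edges are the source of randomness: when the tree is sampled, each gadget realizes one of a few local topologies, and the gadget is tuned so that a constant fraction of these leave an odd number of its ports at odd degree. The key structural input is that the maximum entropy distribution is a $\lambda$-weighted (strongly Rayleigh) spanning-tree measure $p_T \propto \prod_{e\in T}\lambda_e$, with the $\lambda_e$ fixed by the marginals. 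Using the symmetry of a single gadget I would argue that $\lambda_e$ is constant on each symmetry class of edges, reducing the within-gadget topology probabilities --- and hence the port-degree parities --- to a finite calculation; then I would exploit negative correlation in strongly Rayleigh measures and the narrowness of the interfaces to show that conditioning on the global connection pattern perturbs each per-gadget probability by only $o(1)$.

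\textbf{Lower bounding the matching.} Rather than reason about the global min-cost matching directly, I would use cut parities: for any set $S$, the number of odd-degree vertices in $S$ equals, modulo $2$, the number of sampled-tree edges in $\delta(S)$, so whenever an odd number of tree edges cross a cut, the matching must pay at least the cheapest crossing of that cut. Taking a laminar family of $\Theta(n)$ ``moats'' --- one straddling each gadget --- whose cheapest crossings use disjoint edges of graph-length $\Theta(1)$, and summing $\P{\text{odd crossing of moat } i}\cdot(\text{crossing cost})$ over all $i$, I would obtain $\E{\text{matching}} \ge (\tfrac38 - o(1))\,n$ once the gadget is calibrated so that each moat contributes $\tfrac38$ per gadget vertex.

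\textbf{Main obstacle.} The hardest part is the interaction between the matching bound and the correlations of the max entropy measure. Since min-cost matching is a global optimum, I must certify that crossings forced by different cuts cannot be satisfied cheaply by shared edges --- this is exactly what the disjoint/laminar moat family buys --- and simultaneously rule out that favorable correlations across gadgets in the $\lambda$-weighted measure make the odd-port events rare or open up cheaper matchings. Pinning the per-vertex contribution to exactly $\tfrac38$, rather than to a smaller constant, will require a careful choice of gadget and an exact evaluation of the finitely many within-gadget probabilities.
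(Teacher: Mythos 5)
Your overall frame (Hamiltonian gadget instance with a half-integral LP optimum, tree cost $\approx n$, matching cost $\geq (\tfrac38-o(1))n$) matches the paper's $k$-donut strategy, but the core of your argument --- the matching lower bound via a \emph{fixed} laminar family of per-gadget moats --- has a quantitative gap that I believe is fatal. Your bound has the form $\sum_i \P{S_i \text{ is odd}}\cdot w_i$, where the widths $w_i$ must be \emph{simultaneously} dual-feasible: for every edge $e$ of the metric completion, $\sum_{i : e \in \delta(S_i)} w_i \le c_e$. In a cyclic arrangement where adjacent gadgets are at distance $O(1)$, the inter-gadget edges cross all the relevant moats and cap the per-gadget contribution. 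Concretely, on the paper's instance (pairs $(u_i,v_i)$ joined in a ring): moats whose parity is deterministic must be unions of pairs, and their total width is capped by the T-join of the cheapest realization (all odd vertices on one ring), which is only $k = n/4$; moats whose parity is random have $\P{\text{odd}} = \tfrac12$ exactly, so they waste half their width; and the two kinds of moats share feasibility constraints on the spoke and ring edges. Optimizing over all fixed families (this is an LP whose dual is a fractional covering problem with requirements $\P{S \text{ odd}}$) gives at most $\approx k + O(1)$, i.e., a ratio of $\tfrac54$, not $\tfrac{11}{8}$. The paper gets the full $1.5k$ differently: an exchange argument (its Claim 2) shows the minimum matching is \emph{exactly} $\min\{c(M_1), c(M_2)\}$, where $M_1, M_2$ are the two pairings of cyclically adjacent odd vertices; each such edge independently costs $2$ with probability $\tfrac12$, so a Chernoff bound shows both matchings cost $(\tfrac32 - o(1))k$ with high probability. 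This certificate is realization-dependent and global; the extra $k/2$ above the moat bound comes from parity patterns forcing cost-2 edges, which disjoint local cuts cannot see. Unless you exhibit a gadget escaping this cap (your ``calibration'' is exactly the unproven step), the approach yields at best $1.25$.

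Two further gaps. First, the theorem is about the \emph{output tour}, not the Eulerian subgraph: shortcutting can reduce the cost, and your proposal never addresses this. The paper spends an entire section constructing adversarial Eulerian tours ($B$-tours, alternating the traversed side of successive circuits) and proving that shortcutting them reduces the cost by only $O(1)$; without some such argument (or an argument that \emph{every} shortcutting stays expensive, which the paper does not claim and which may be false), the bound on $\E{c(T \uplus M)}$ does not transfer to the algorithm's output. Second, your structural handle on the max entropy distribution is wrong as stated: the distribution is $\lambda$-uniform (strongly Rayleigh with product weights) only when the marginal vector lies in the \emph{relative interior} of the spanning tree polytope, and here it does not (there are 1-edges and tight sets); the paper explicitly flags this and instead proves an exact factorization lemma --- tight sets $S$ with $x(E(S)) = |S|-1$ split the max entropy distribution into a product of independent distributions --- which reduces the sampling to independent fair coin flips with no $o(1)$ conditioning error to control.
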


The instances we consider are a variation on a family of TSP instances recently introduced in the literature by Boyd and Seb\H{o} \cite{BoydS21} known as {\em $k$-donuts} (see \Cref{fig:kdonut}). 
$k$-donuts have $n=4k$ vertices, and are known to have an integrality gap of 4/3 under a particular metric. In contrast, we consider $k$-donuts under the graphic metric, in which case the optimal tour is a Hamiltonian cycle, which has cost $4k+2$. We also modify one square of the $k$-donut to simplify our use of the maximum entropy algorithm.  The objective value of the Subtour LP for our graphic $k$-donut variant is also $4k+2$; thus, these instances have an integrality gap of 1. We show that as the instance size grows, the expected length of the connected Eulerian subgraph found by the max entropy algorithm (using the graphic metric) converges to $1.375n$ from below and thus the ratio of this cost to the value of the LP (and the optimal tour) converges to 1.375.  We can further show that there is a bad Eulerian tour of the Eulerian subgraph such that shortcutting the Eulerian tour results in a tour that is still at least 1.375 times the cost of the optimal tour.

The version of the maximum entropy algorithm we analyze is the one used by Karlin, Klein, and Oveis Gharan \cite{KKO20,KKO21}.  In this algorithm, they select an edge $e^+$ with $x^*_{e^+}=1$, remove this edge from the graph, and then sample a spanning tree from the remaining LP solution according to the maximum entropy distribution.  It is known that for any extreme point solution $x^*$ of the Subtour LP, there is at least one edge $e$ with $x^*_e=1$ (see, for example, Boyd and Pulleyblank \cite{BoydP91}). As before, for computational feasibility they allow a small error on the marginal constraints. That is, this algorithm finds a distribution $p_T$ on spanning trees in the graph with $e^+$ removed that maximizes the objective function  $-\sum_T p_T \log p_T$ subject to the relaxed marginal equations $\sum_{T:e\in T} p_T \le (1+\epsilon)x^*_e$ for all edges $e \neq e^+$. In this work, we show that on the $k$-donut it is possibly to efficiently sample from the max entropy distribution with no error on the marginal equations, i.e., we can set $\epsilon = 0$. Therefore we analyze this idealized version. 


It thus appears that the maximum entropy algorithm is not the algorithm that will ultimately resolve the 4/3 conjecture in the affirmative, should that be possible. While this statement depends on the fact that there is a bad Eulerian tour of the connected Eulerian subgraph, all work in this area of which we are aware considers the ratio of the cost of the connected Eulerian subgraph to the LP value, rather than the ratio of the shortcut tour to the LP value.  We also do not know of work which shows that there is always a way to shortcut the subgraph to a tour of significantly cheaper cost.  Indeed, it is known that finding the best shortcutting is an NP-hard problem in itself \cite{PapadimitriouV84}.


\begin{figure}[htb!]
\centering
\begin{subfigure}[t]{0.48\textwidth}
\centering
\begin{tikzpicture}[node distance=1cm, every node/.style={circle, draw, fill=black, inner sep=0pt, minimum size=5pt}, scale=0.9]
\foreach \i in {1,...,8} {
  \node (u\i) at (\i*45+22.5:3) {};
  \node (v\i) at (\i*45+22.5:1.5) {};
}

\node (w0) at (105:2.25) {};
\node (w1) at (75:2.25) {};
\foreach \i [evaluate=\i as \j using {int(Mod(\i, 8)+1)}] in {2,4,6,8} {
  \draw (u\i) -- (u\j);
  \draw (v\i) -- (v\j);
}


\draw (w0) -- (w1);

\foreach \i [evaluate=\i as \j using {int(Mod(\i, 8)+1)}] in {1,3,5,7} {
  \draw (u\i) -- (u\j);
  \draw (v\i) -- (v\j);
}

\draw (w0) -- (u2);
\draw (w0) -- (v2);
\draw (w1) -- (u1);
\draw (w1) -- (v1);

\foreach \i in {1,...,8} {
  \draw (u\i) -- (v\i);
}
\end{tikzpicture}
\caption{Our variant on the $k$-donut for $k=4$. There are $n=4k+2$ vertices. All edges shown have cost 1, and all edges not shown have cost equal to the shortest path distance. We will refer to the outer cycle as the \emph{outer ring}, and the inner cycle as the \emph{inner ring}.
}
\label{fig:kdonut_base}
\end{subfigure}
\hfill
\begin{subfigure}[t]{0.48\textwidth}
    \centering
    \begin{tikzpicture}[node distance=1cm, every node/.style={circle, draw, fill=black, inner sep=0pt, minimum size=5pt},scale=0.9]
\foreach \i in {1,...,8} {
  \node (u\i) at (\i*45+22.5:3) {};
  \node (v\i) at (\i*45+22.5:1.5) {};
}

\node (w0) at (105:2.25) {};
\node (w1) at (75:2.25) {};
\foreach \i [evaluate=\i as \j using {int(Mod(\i, 8)+1)}] in {2,4,6,8} {
  \draw (u\i) -- (u\j);
  \draw (v\i) -- (v\j);
}


\draw (w0) -- (w1);

\foreach \i [evaluate=\i as \j using {int(Mod(\i, 8)+1)}] in {3,5,7} {
  \draw[dash pattern=on 2pt off 2pt] (u\i) -- (u\j);
  \draw[dash pattern=on 2pt off 2pt] (v\i) -- (v\j);
}

\draw[dash pattern=on 2pt off 2pt] (w0) -- (u2);
\draw[dash pattern=on 2pt off 2pt] (w0) -- (v2);
\draw[dash pattern=on 2pt off 2pt] (w1) -- (u1);
\draw[dash pattern=on 2pt off 2pt] (w1) -- (v1);

\foreach \i in {1,...,8} {
  \draw[dash pattern=on 2pt off 2pt] (u\i) -- (v\i);
}
\end{tikzpicture}
\caption{An extreme point optimal solution $x$ to the Subtour LP. The dotted edges have $x_e = \frac12$ and the solid edges have $x_e =1$.
}
\label{fig:kdonut_lp}
\end{subfigure}
\caption{The $k$-donut instance.}
\label{fig:kdonut}
\end{figure}

For our instances, $\frac{4}{3}$-approximation algorithms are already known.  For example, M\"omke and Svensson \cite{MomkeS16} have shown how to obtain a $\frac{4}{3}$-approximation algorithm for half-integral instances of graph TSP.  Earlier work of the authors \cite{JinKW23} gives a $\frac{4}{3}$-approximation for half-integral cycle cut instances of the TSP, a class which includes this $k$-donut variant.


Our work continues a thread of papers showing lower bounds on TSP approximation algorithms.  Rosenkrantz, Stearns, and Lewis \cite{RosenkrantzSL77} show that the nearest neighbor algorithm can give a tour of cost $\Omega(\log n)$ times the optimal, and that the nearest and cheapest insertion algorithms can give a tour of cost $2 - \frac{1}{n}$ times the optimal.  Cornu\'{e}jols and Nemhauser \cite{CornuejolsN78} show that the Christofides-Serdyukov approximation factor of $\frac{3}{2}$ is essentially tight.  Chandra, Karloff, and Tovey \cite{ChandraKT06} show that the 2-OPT local search heuristic can return a tour of cost $\Omega(\sqrt{n})$ of the optimal cost and give a bound of $\Omega(\sqrt[2k]{n})$ for $k$-OPT heuristic; Zhong \cite{Zhong21} has improved some of these bounds.

\section{$k$-Donuts}

We first formally describe the construction of a graphic $k$-donut instance, which will consist of $4k+2$ vertices. The cost function $c_{\{u,v\}}$ is given by the shortest path distance in the following graph. 

\begin{defn}[$k$-Donut Graph]
For $k \in \mathbb{Z}_+$ with $k \ge 3$, the \emph{$k$-donut} is the graph consisting of $2k$ \emph{outer} vertices $u_0, \dots, u_{2k-1}$, $2k$ \emph{inner} vertices $v_0, \dots, v_{2k-1}$, and two special vertices $w_0$ and $w_1$. For each $0 \le i \le 2k - 1$, the graph includes the edges $\{u_i, u_{(i+1) \bmod 2k}\}$, $\{v_i, v_{(i+1) \bmod 2k}\}$, and $\{u_i, v_i\}$. In addition, the graph contains the edges $\{w_0, w_1\}$, $\{w_0, u_{0}\}$, $\{w_0, v_{0}\}$, $\{w_1, u_1\}$, and $\{w_1, v_1\}$. See \Cref{fig:kdonut_base} for an illustration. We refer to the cycle formed by the $u$-vertices as the \emph{outer ring}, and the cycle formed by the $v$-vertices as the \emph{inner ring}.
\end{defn}

For clarity of notation, we will omit the "mod $2k$" operation when indexing the vertices of the $k$-donut throughout the remainder of the paper. Thus, whenever we write $u_j$ or $v_j$, it should be understood as $u_{j \bmod 2k}$ or $v_{j \bmod 2k}$, respectively. \bedit{We also refer to the subgraph induced by the six vertices $\{u_0,v_0,w_0,w_1,u_1,v_1 \}$ as the "envelope gadget".}

In the graphic $k$-donut instance, there exists a {half-integral} extreme point solution $x$ to the Subtour LP with value $4k + 2$, which we will use throughout the paper. \begin{defn}[$k$-Donut Extreme Point]\label{def:extremepoint}
    Let $x_{\{u_i, v_i\}} = 1/2$ for all $0 \le i \le 2k - 1$; let $x_{\{u_i, u_{i+1}\}} = x_{\{v_i, v_{i+1}\}} = 1$ for all odd $i$; and let $x_{\{u_i, u_{i+1}\}} = x_{\{v_i, v_{i+1}\}} = 1/2$ for all even $i \ne 0$. Additionally, set $x_{\{w_0, w_1\}} = 1$ and $x_{\{w_0, u_{0}\}} = x_{\{w_0, v_{0}\}} = x_{\{w_1, u_1\}} = x_{\{w_1, v_1\}} = 1/2$.\footnote{By slightly perturbing the metric, one can ensure that $x$ is the \textit{unique} optimal solution to the LP, and thus the solution used by the max-entropy algorithm. (Of course, the instance would then no longer be strictly graphic.)}
\end{defn}   
See \Cref{fig:kdonut_lp} for an illustration of $x$.

\begin{restatable}{prop}{extremepoint}
    The point $x$ given in \cref{def:extremepoint} is an extreme point solution to the Subtour LP. 
\end{restatable}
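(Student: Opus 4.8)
The plan is to verify that $x$ is \emph{feasible} for the Subtour LP \eqref{LP} and then show it is a vertex by proving that no nonzero perturbation keeps us inside the polytope. Throughout it is convenient to record the structure of the support. Let $M$ be the set of edges with $x_e = 1$ and $F$ the set of edges with $x_e = \tfrac12$, so that $x = \mathbf 1_M + \tfrac12 \mathbf 1_F$. A direct inspection of \cref{def:extremepoint} shows that $M$ consists of the ``odd'' ring edges $\{u_i,u_{i+1}\},\{v_i,v_{i+1}\}$ ($i$ odd) together with $\{w_0,w_1\}$, and that this is a \emph{perfect matching} on all $4k+2$ vertices. The set $F$ consists of the $2k$ spokes, the ``even'' ring edges ($i$ even, $i\neq 0$), and the four envelope edges $\{w_0,u_0\},\{w_0,v_0\},\{w_1,u_1\},\{w_1,v_1\}$; every vertex meets exactly two edges of $F$, so $F$ is a disjoint union of cycles, which one checks to be exactly the two triangles $u_0 v_0 w_0$ and $u_1 v_1 w_1$ together with the $k-1$ ``squares'' $u_{2j}\,u_{2j+1}\,v_{2j+1}\,v_{2j}$ for $1 \le j \le k-1$.

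For feasibility, the bound constraints are immediate since every $x_e \in \{0,\tfrac12,1\}$, and a vertex-by-vertex check (each vertex meets one edge of $M$ and two of $F$) gives $x(\delta(v)) = 1 + 2\cdot\tfrac12 = 2$. For the subtour constraints I would write $x(\delta(S)) = |M \cap \delta(S)| + \tfrac12 |F\cap\delta(S)|$ and argue by cases on $f := |F \cap \delta(S)|$, which is even because $F$ is a union of cycles. If $f \ge 4$ then already $x(\delta(S)) \ge 2$; if $f = 0$ then $S$ is a union of whole $F$-cycles, and since the graph obtained by contracting each $F$-cycle (retaining the $M$-edges between them) is $2$-edge-connected, at least two $M$-edges cross; the remaining case $f = 2$ uses that the two rings provide internally-disjoint ways around the donut, so a single split $F$-cycle cannot by itself disconnect $S$ from its complement, whence $|M\cap\delta(S)|\ge 1$. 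I expect this $f=2$ case to be the fiddliest part, and note that the usual shortcut of writing $x$ as a convex combination of Hamiltonian tours (equivalently $2x$ as two edge-disjoint Hamiltonian cycles) is \emph{not} available here, because the envelope contributes odd cycles to $F$.

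For the extreme-point property, it suffices to show there is no nonzero $d \in \mathbb R^{E}$ with $x \pm \varepsilon d$ feasible for all small $\varepsilon > 0$. Such a $d$ must satisfy $d_e = 0$ whenever $x_e \in \{0,1\}$ (tight bound), so $d$ is supported on $F$; must satisfy $d(\delta(v)) = 0$ at every vertex (degree equalities); and must satisfy $d(\delta(S)) = 0$ for every $S$ with $x(\delta(S)) = 2$. Since each vertex meets exactly two edges of $F$, the condition $d(\delta(v))=0$ forces $d$ to take opposite signs on those two edges, so $d$ alternates in sign around each cycle of $F$. On the two odd envelope triangles this alternation is inconsistent and forces $d=0$ on all six envelope edges, whereas each square leaves one free parameter $\alpha_j$, with $d = +\alpha_j$ on its two ring edges and $-\alpha_j$ on its two spokes. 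To kill each $\alpha_j$ I would use the cut $B_j := \{u_i, v_i : 1 \le i \le 2j\}$: one checks $x(\delta(B_j)) = 2$, the only crossing support edges being $\{w_1,u_1\},\{w_1,v_1\},\{u_{2j},u_{2j+1}\},\{v_{2j},v_{2j+1}\}$ of value $\tfrac12$ each, so $B_j$ is tight and $d(\delta(B_j))=0$; but $d$ already vanishes on $\{w_1,u_1\},\{w_1,v_1\}$, leaving $d(\delta(B_j)) = 2\alpha_j$, whence $\alpha_j = 0$. As $j$ ranges over $1,\dots,k-1$ this kills every square, so $d = 0$ and $x$ is a vertex.

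The conceptual core is the extreme-point argument, and its one load-bearing step is the choice of the cuts $B_j$: the odd envelope triangles pin the envelope to zero ``for free,'' and each $B_j$ is engineered to cross exactly one square plus two already-zeroed envelope edges, so that tightness directly forces $\alpha_j = 0$. The main point requiring care is verifying both that each $B_j$ is tight and that its boundary meets the support in precisely those four half-edges (in particular that the value-$0$ boundary edges $\{u_0,u_1\},\{v_0,v_1\}$ are harmless). On the feasibility side, the $f=2$ subtour case is the most delicate, for the reason noted above that the standard tour-decomposition certificates fail on this instance.
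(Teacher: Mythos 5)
Your proof is correct, and its core extreme-point argument is essentially the paper's: restrict any perturbation to the half-integral support, observe the degree equalities force sign-alternation around each support cycle so the two odd envelope triangles are forced to zero, and then kill each square's remaining free parameter via a tight cut of the donut crossing that square's two ring edges. The only differences are cosmetic---you route your tight cuts $B_j$ through the envelope's half-edges $\{w_1,u_1\},\{w_1,v_1\}$ rather than through $\{w_0,w_1\}$ as the paper does, and you additionally verify feasibility, which the paper's proof takes for granted.
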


We prove this in \cref{sec:appendix}. In the rest of the paper, we will say that a set $S \subseteq V$ is \emph{tight} if $x(\delta(S)) = 2$, and \emph{proper} if $2 \le |S| \le |V| - 2$. For a set of edges $M$, we define $c(M) = \sum_{e \in M} c_e$. For an LP solution $x$, we let $c(x)$ denote the value of the LP objective function, i.e., $c(x) = \sum_{e \in E} c_e x_e$. For a subset $S \subseteq V$, we use $E(S)$ to denote the set of edges with both endpoints in $S$.

Our $k$-donut construction and associated LP solution are inspired by a similar instance of Boyd and Seb\H{o} \cite{BoydS21}. 

\subsection{The Max Entropy Algorithm on the $k$-Donut}

We now describe the max entropy algorithm on general graphs, and its behavior when specialized to the $k$-donut. We work with the version of the max entropy algorithm used in \cite{KKO21} and \cite{KKO20}. In both works, the authors show that, without loss of generality, there exists an edge $e^+$ with $x_{e^+} = 1$.\footnote{The argument is as follows. If such an edge does not exist,  split a node $v$ into two nodes $v_1$, $v_2$; connect 2 of the edges out of $v$ to $v_1$ and the other two to $v_2$. Then, connect $v_1$ to $v_2$ with an edge $e^+$ of cost
$c(e^+) = 0$ and $x_{e^+} = 1$.} In our instances, such an edge exists by construction; we will use $e^+=\{w_0,w_1\}.$

To sample a 1-tree $T$, the algorithm begins by adding $e^+$ to $T$. It then samples a spanning tree from the max entropy distribution with marginals given by the LP solution with $e^+$ removed, and adds this tree to $T$.  In other words, let $\mathcal{X}$ be the set of spanning trees in the support of $x$ with $e^+$ removed. The algorithm solves for the distribution $\{p_X : X \in \mathcal{X}\}$ that maximizes the entropy objective $-\sum_{X \in \mathcal{X}} p_X \log p_X$, subject to the exact marginal constraints $\sum_{X : e \in X} p_X = x_e$ for all edges $e \ne e^+$. A spanning tree $X$ is then sampled from this distribution and added to $T$. Finally, the algorithm computes a minimum-cost perfect matching $M$ on the odd-degree vertices of $T$, constructs an Eulerian tour on $M \uplus T$, and shortcuts the tour to obtain a Hamiltonian cycle.\footnote{Given an Eulerian tour $(t_0,\dots,t_\ell)$, we shortcut it to a Hamiltonian cycle by keeping only the first occurrence of every vertex except $t_0$. Due to the triangle inequality, the resulting Hamiltonian cycle has cost no greater than that of the Eulerian tour.}

The above describes the maximum entropy algorithm for general graphs. Algorithm~\ref{alg} presents its specialization to the $k$-donut instance. In \Cref{lem:max_entropy_kdonut} in the Appendix, we prove that Algorithm~\ref{alg} is indeed the correct specialization of the maximum entropy algorithm to the $k$-donut instance. See \Cref{fig:kdonut_tree} for an illustration of how the algorithm selects edges from the $k$-donut to put into the 1-tree.

\begin{algorithm}[h]
\caption{Max Entropy Algorithm on the $k$-Donut}

{\small{\textbf{Note.} All sampling in this algorithm is done independently and uniformly at random.
}}

\begin{algorithmic}[1]
\State Set $T = \emptyset$.\Comment{$T$ will be a 1-tree} 
\State Let $e^+ = \{w_0, w_1\}$; add $e^+$ to $T$. 
\State For each odd $i$, add the edges $\{u_i, u_{i+1}\}$ and $\{v_i, v_{i+1}\}$ to $T$.
\State For each odd $i$, sample one edge from the pair $\{\{u_i, v_i\}, \{u_{i+1}, v_{i+1}\}\}$ and add it to $T$.
\State For each even $i$, sample one edge from the pair $\{\{u_i, u_{i+1}\}, \{v_i, v_{i+1}\}\}$ and add it to $T$.
\State Sample one edge from the pair $\{\{w_0, u_0\}, \{w_0, v_0\}\}$ and add it to $T$.
\State Sample one edge from the pair $\{\{w_1, u_1\}, \{w_1, v_1\}\}$ and add it to $T$.
\State Compute the minimum-cost perfect matching $M$ on the odd vertices of $T$. Compute an Eulerian tour of $T \uplus M$ and shortcut it to return a Hamiltonian cycle.
\end{algorithmic}
\label{alg}
\end{algorithm}

\begin{figure}[htb!]
\centering
\begin{tikzpicture}[node distance=2cm, minimum size=2pt, every node/.style={circle, draw,scale=0.6},font=\tiny]

\def\offset{0.5} 

\foreach \i in {1,2,3,4} {
    \node (u\i) at ($(-\i*45+112.5:3) + (\offset, 0)$) {$u_{\i}$};
    \node (v\i) at ($(-\i*45+112.5:1.5) + (\offset, 0)$) {$v_{\i}$};
}
\foreach \i in {0,5,6,7} {
    \node (u\i) at ($(-\i*45+112.5:3) - (\offset, 0)$) {$u_{\i}$};
    \node (v\i) at ($(-\i*45+112.5:1.5) - (\offset, 0)$) {$v_{\i}$};
}



\node[font=\tiny] (w0) at (100:2.25) {$w_0$};
\node[font=\tiny] (w1) at (80:2.25) {$w_1$};
\foreach \i [evaluate=\i as \j using {int(Mod(\i+1, 8))}] in {1,3,5,7} {
  \draw (u\i) -- (u\j);
  \draw (v\i) -- (v\j);
}


\draw (w0) -- (w1);

\foreach \i [evaluate=\i as \j using {int(Mod(\i+1, 8))}] in {2,4,6} {
  \draw[dash pattern=on 2pt off 2pt] (u\i) -- (u\j);
  \draw[dash pattern=on 2pt off 2pt] (v\i) -- (v\j);
}

\draw[dash pattern=on 2pt off 2pt] (w0) -- (u0);
\draw[dash pattern=on 2pt off 2pt] (w0) -- (v0);
\draw[dash pattern=on 2pt off 2pt] (w1) -- (u1);
\draw[dash pattern=on 2pt off 2pt] (w1) -- (v1);

\foreach \i in {0,...,7} {
  \draw[dash pattern=on 2pt off 2pt] (u\i) -- (v\i);
}

\foreach \i in {1,3} {
  \draw[red] ($(-\i*45+67+22.5:2.4) + (\offset, 0)$) circle (1.5cm);
}
\foreach \i in {5,7} {
  \draw[red] ($(-\i*45+67+22.5:2.4) - (\offset, 0)$) circle (1.5cm);
}


\end{tikzpicture}
\caption{The max entropy algorithm begins by putting all of the 1-edges into the 1-tree. Then, one edge among the pair of dotted edges inside each red circled cut will be chosen independently. Next, one edge among each pair of dotted edges in the cycle resulting from contracting the red sets will be chosen independently. Finally, one of the two dotted edges incident to $w_0$ will be chosen independently, and the same for $w_1$.}\label{fig:kdonut_tree}
\end{figure}

The following claim is the only property we need in the remainder of the proof:
\begin{claim}\label{claim:1odd1even}
    For every pair of vertices $(u_i,v_i)$, $0 \le i \le 2k-1$, exactly one of $u_i$ or $v_i$ will have odd degree in $T$, each with probability $\frac12$. Let $O_i$ be the indicator random variable that $u_i$ is odd. Then, $O_0, \ldots, O_{2k-1}$ are mutually independent. 
\end{claim}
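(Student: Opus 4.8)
The plan is to track, for each vertex, the parity of its degree in the $1$-tree $T$ as an affine function over $\mathbb{F}_2$ of the independent coin flips made in Steps 4--7 of Algorithm~\ref{alg}. I would introduce one Bernoulli$(\tfrac12)$ bit per sampling step: for each odd $i$ a bit $S_i$ recording which spoke is chosen from the pair $\{\{u_i,v_i\},\{u_{i+1},v_{i+1}\}\}$ in Step 4; for each even $i\neq 0$ a bit $R_i$ recording whether the outer or inner edge is chosen in Step 5; and bits $W_0,W_1$ for Steps 6--7. Since all sampling is independent and uniform, these $2k+1$ bits are mutually independent uniform bits, and each edge of $T$ is either deterministically present (the $1$-edges of Step 3 and $e^+$) or present according to exactly one of these bits. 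I would also record that $\{u_0,u_1\}$ and $\{v_0,v_1\}$ have LP value $0$ and hence never appear in $T$.

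First I would compute $\deg(u_i)$ and $\deg(v_i)$ for each $i$ by summing the indicators of incident edges. The "exactly one is odd" half then follows from $\deg(u_i)+\deg(v_i)\equiv 1\pmod 2$: the spoke $\{u_i,v_i\}$ contributes equally to both endpoints, and each of $u_i,v_i$ is hit by exactly one always-present ring $1$-edge, so these contribute an even total; meanwhile the single "split" pair at that position---a Step-5 ring pair, or for $i\in\{0,1\}$ the pair of $w$-edges---places exactly one of its two edges in $T$ and thus contributes exactly $1$ to the sum regardless of the coin. Carrying out the arithmetic yields clean closed forms, e.g.\ $O_i=R_i\oplus S_{i-1}$ for even $i\neq 0$, $O_i=1\oplus R_{i-1}\oplus S_i$ for odd $i\neq 1$, and the boundary cases $O_0=S_{2k-1}\oplus W_0$ and $O_1=1\oplus S_1\oplus W_1$. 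Since each $O_i$ is a nonconstant $\mathbb{F}_2$-affine function of the bits, it is uniform, giving probability $\tfrac12$.

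For mutual independence I would invoke the standard fact that affine functions of independent uniform bits over $\mathbb{F}_2$ are jointly uniform (hence mutually independent) precisely when their linear parts are linearly independent over $\mathbb{F}_2$; indeed, independence of the rows makes the associated linear map $\mathbb{F}_2^{2k+1}\to\mathbb{F}_2^{2k}$ surjective, so it pushes the uniform distribution forward to the uniform distribution, and adding the constant vector only translates. It then remains to show the $2k$ linear forms above are independent, which I would do by peeling: $W_0$ appears only in $O_0$ and $W_1$ only in $O_1$, so neither index can occur in a vanishing combination; among the survivors $S_1$ then appears only in $O_2$, forcing out $O_2$; next $R_2$ appears only in $O_3$, forcing out $O_3$; and the alternation $S_{2j-1}\to O_{2j}$, $R_{2j}\to O_{2j+1}$ zig-zags around the ring until every index is eliminated, so the only vanishing combination is trivial.

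The main obstacle is bookkeeping rather than any conceptual difficulty: correctly handling the envelope gadget and the wrap-around, since positions $0$ and $1$ behave differently (the $w$-edges replace the value-$0$ ring edges $\{u_0,u_1\},\{v_0,v_1\}$) and the index $2k-1$ closes the cycle through $O_0$. I would double-check that the Step-4 spoke pairing partitions all $2k$ positions (the pairs $\{i,i+1\}$ over odd $i$), that $w_0,w_1$ always have degree $2$ and so never enter the matching, and that the peeling chain traverses $O_0,\dots,O_{2k-1}$ exactly once---which is exactly what the cyclic incidence structure of the $O_i$ in terms of the $R$- and $S$-bits guarantees.
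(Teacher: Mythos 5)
Your proof is correct, and it rests on the same combinatorial core as the paper's: the parities $O_i$ are functions of the independent coin flips of Algorithm~\ref{alg}, and in a suitable cyclic order each $O_i$ is controlled by a coin that none of the others in the chain has used yet. The difference is in the formalization. The paper argues probabilistically: it fixes a target vector $b$, conditions on whether $u_1w_1 \in T$, and walks around the ring $O_1, O_2, \ldots, O_{2k-1}, O_0$, noting at each step that exactly one value of a not-yet-revealed uniform coin ($S_1$, then $R_2$, then $S_3$, and so on, finally $W_0$) makes $O_j = b_j$; the chain rule then gives $\mathbb{P}\left[O = b\right] = 2^{-2k}$ directly. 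You instead write each $O_i$ as an explicit $\mathbb{F}_2$-affine form in the $2k+1$ coins and certify joint uniformity by linear independence of the $2k$ linear parts, proved by peeling ($W_0$ pins $O_0$, $W_1$ pins $O_1$, then $S_1$ pins $O_2$, $R_2$ pins $O_3$, and so on around the ring). Your peeling is the algebraic counterpart of the paper's conditioning chain, with a slightly different system of witness coins (the paper spends $W_1$ as the conditioned coin and uses $S_1, R_2, \ldots, S_{2k-1}, W_0$ as the fresh ones; you use $W_0, W_1, S_1, R_2, \ldots, R_{2k-2}$ and leave $S_{2k-1}$ as the unused extra), so the two arguments are closely parallel rather than genuinely divergent. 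What your version buys is a mechanical, reusable certificate: the closed forms plus the rank criterion apply verbatim to any parity structure of this kind, and they also deliver the first half of the claim (each $O_i$ is a nonconstant affine form, hence uniform) for free; your parity-sum argument that $\deg(u_i)+\deg(v_i)$ is always odd is likewise a cleaner packaging of the paper's "flips the parity of exactly one of $u_i,v_i$" step. What the paper's version buys is brevity and freedom from bookkeeping: your displayed constants (e.g.\ whether $O_i = R_i \oplus S_{i-1}$ or its complement) depend on orientation conventions for the bits, which are irrelevant to independence but would need to be fixed consistently in a full writeup.
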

\begin{proof}
    We prove the first part of the claim when $i$ is odd; the case where $i$ is even is similar. Since $i$ is odd, the edges $\{u_i, u_{i+1}\}$ and $\{v_i, v_{i+1}\}$ are in $T$. Then, one of the two edges $\{u_i, v_i\}$ and $\{u_{i+1}, v_{i+1}\}$ is added to $T$, and regardless of the choice, $u_i$ and $v_i$ so far have the same parity. Finally, if $i > 1$, one edge in $\{\{u_{i-1}, u_i\}, \{v_{i-1}, v_i\}\}$ is added uniformly at random, and if $i = 1$, one edge in $\{\{w_1, u_i\}, \{w_{1}, v_i\}\}$ is added uniformly at random. In either case, this flips the parity of exactly one of $u_i,v_i$. 

    To show that $O_0, \ldots, O_{2k-1}$ are mutually independent, it suffices to show that for all binary vectors $b \in \{0,1\}^{2k}$, we have
    $$\mathbb{P}(O_0 = b_0, \ldots, O_{2k-1} = b_{2k-1}) = 2^{-2k}.$$
    \Cref{fig:indep} shows the edges of the $k$-donut that are incident to $u_0, \ldots, u_{2k-1}$. The presence or absence of these edges in $T$ determines the realizations of $O_0, \ldots, O_{2k-1}$.

        \begin{figure}
        \centering
        \begin{tikzpicture}[every node/.style={circle, draw, fill=black, inner sep=0pt, minimum size=5pt}]
        
        \foreach \i in {1,2,...,8} {
            \node (u\i) at (\i, 1) {};
            \node (v\i) at (\i, 0) {};
        }

         \node (u1_label) at (1, 1) [label=above:$u_1$] {};
         \node (u2_label) at (2, 1) [label=above:$u_2$] {};
         \node (u7_label) at (7, 1) [label={[yshift=-6.7pt]above:$u_{2k-1}$}] {};
         \node (u0_label) at (8, 1) [label=above:$u_0$] {};
         \node (v1_label) at (1, 0) [label=below:$v_1$] {};
         \node (v2_label) at (2, 0) [label=below:$v_2$] {};
         \node (v7_label) at (7, 0) [label={[yshift=6.7pt]below:$v_{2k-1}$}] {};
         \node (v0_label) at (8, 0) 
         [label=below:$v_0$] {};

         \node (w0) at (9, 1) [label=above:$w_0$] {};
         \node (w1) at (0, 1)[label=above:$w_1$] {};

        \draw[dash pattern=on 2pt off 2pt] (u8) -- (w0);
        \draw[dash pattern=on 2pt off 2pt] (u1) -- (w1);

        \foreach \i [evaluate=\i as \j using {\i+1}] in {1,3,5,7} {
            \draw (u\i) -- (u\j);
            \draw[dash pattern=on 2pt off 2pt]  (u\i.center) -- (v\i.center);
        }
        \foreach \j in {2,4,6,8} {
            \draw[dash pattern=on 2pt off 2pt]  (u\j) -- (v\j);
        }

        \foreach \i [evaluate=\i as \j using {\i+1}] in {2,4,6} {
            \draw[dash pattern=on 2pt off 2pt]  (u\i) -- (u\j);
        }
            
        \end{tikzpicture}
        \caption{The edges of the $k$-donut incident to $u_0, u_1, \ldots, u_{2k-1}$.}
        \label{fig:indep}
    \end{figure}

    Fix any binary vector $b \in \{0,1\}^{2k}$. First, condition on the event that $u_1w_1 \in T$. Consider $u_1$. Among the two edges $\{u_1v_1, u_2v_2\}$, one of them is selected to put into $T$, with probability $\frac12 $ each. Exactly one of these choices makes $O_1 = b_1$. Next, further condition on event $O_1 = b_1$ and consider $u_2$. The edge $u_2u_3$ is either in $T$ or not, with probability $\frac12$ each. Exactly one of these choices makes $O_2 = b_2$. Repeating this line of reasoning, we see that for each $j$, conditioned on $u_1w_1 \in T$ and $O_i = b_i$ for $i < j$, there is the probability that $O_j = b_j$ is exactly $\frac12$. Therefore, 
    $$\mathbb{P}(O_0 = b_0, \ldots, O_{2k-1} = b_{2k-1} \mid u_1w_1 \in T) = 2^{-2k}.$$
    The same argument shows that 
    $$\mathbb{P}(O_0 = b_0, \ldots, O_{2k-1} = b_{2k-1} \mid u_1w_1 \not\in T) = 2^{-2k},$$
    which implies $\mathbb{P}(O_0 = b_0, \ldots, O_{2k-1} = b_{2k-1}) = 2^{-2k}$ and completes the proof. 

\end{proof}


\section{Analyzing the Performance of Max Entropy}
\label{sec:analysis}

We now analyze the performance of the max entropy algorithm on graphic $k$-donuts. We first characterize the structure of the min-cost perfect matching on the odd vertices of $T$. We then use this structure to show that in the limit as $k \to \infty$, the approximation ratio of the max entropy algorithm approaches $1.375$ from below.
\begin{claim}
\label{claim:M1orM2}

    Let $T$ be any 1-tree with the property that for every pair of vertices $(u_i,v_i)$ for $0 \le i \le 2k-1$, exactly one of $u_i$ or $v_i$ has odd degree in $T$. (This is \Cref{claim:1odd1even}). 

    Let $o_0,\dots,o_{2k-1}$ indicate the odd vertices in $T$ where $o_i$ is the odd vertex in the pair $(u_i,v_i)$. Let $M$ be a minimum-cost perfect matching on the odd vertices of $T$. Define: 
    $$M_1 = \{(o_0,o_1),(o_2,o_3),\dots,(o_{2k-2},o_{2k-1})\}$$
    $$M_2 = \{(o_{2k-1},o_0),(o_1,o_2),\dots,(o_{2k-3},o_{2k-2})\}$$
    Then, 
    $$c(M) = \min\{c(M_1),c(M_2)\}.$$
\end{claim}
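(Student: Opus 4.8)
My plan is to exploit the cyclic structure of the odd vertices. First I would record the only feature of the metric I need: restricted to the odd vertices, $c$ is the metric of the prism $C_{2k}\,\square\,K_2$. Writing $d_{ij}=\min(|i-j|,\,2k-|i-j|)$ for the cyclic distance between positions $i$ and $j$, a short case check on shortest paths in the $k$-donut gives $c(o_i,o_j)=d_{ij}+\mathbbm{1}[\,o_i,o_j\text{ lie on different rings}\,]$ for $i\neq j$; here one must confirm that the envelope gadget never creates a shortcut between two ring vertices (it does not, since any detour through $w_0,w_1$ has length at least $3$). In this language $M_1$ and $M_2$ are exactly the two perfect matchings of the cycle $C_{2k}$ into adjacent pairs: they use complementary gaps, so $c(M_1)+c(M_2)=\sum_{i=0}^{2k-1}c(o_i,o_{i+1})$, and they are the only matchings all of whose edges join consecutive positions. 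Since both are feasible, $c(M)\le\min\{c(M_1),c(M_2)\}$ is immediate; the work is the reverse inequality.

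Next I would argue that some minimum-cost matching is \emph{non-crossing} when the $2k$ odd vertices are drawn as points of a circle in position order. The engine throughout is the quadrangle (Monge) inequality for the prism metric: for four positions in cyclic order $a,b,c,d$ lying on a common arc of length less than $k$, one has $c(o_a,o_b)+c(o_c,o_d)\le c(o_a,o_c)+c(o_b,o_d)$. Indeed the cyclic-distance part of the left-hand side is smaller by $2(c-b)\ge 2$, while the ring-mismatch part can grow by at most $2$, so the net change is $\le 0$. Applying this to any crossing pair of edges lets me replace them by the non-crossing pair without increasing the cost, and iterating yields a non-crossing optimum.

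Finally I would drive a non-crossing optimum down to $M_1$ or $M_2$. In a non-crossing matching the points strictly inside any chord are matched among themselves, so their number is even and every edge spans an \emph{odd} number of positions; if the matching is neither $M_1$ nor $M_2$ it therefore contains an edge $(o_a,o_b)$ of span at least $3$, whose neighbor $o_{a+1}$ is matched to some interior vertex $o_{a'}$ with $a'\ge a+2$. The same quadrangle inequality, now applied to the swap $(o_a,o_b),(o_{a+1},o_{a'})\mapsto(o_a,o_{a+1}),(o_{a'},o_b)$, shows this exchange does not increase the cost: the cyclic-distance part drops by $2(a'-a-1)\ge 2$ and the mismatch part rises by at most $2$, while the long edge $(o_{a'},o_b)$ is strictly shorter than $(o_a,o_b)$. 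Iterating shrinks every edge to span $1$, at which point the matching is forced to be $M_1$ or $M_2$; combined with the trivial upper bound this gives $c(M)=\min\{c(M_1),c(M_2)\}$.

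The main obstacle is the cyclic wraparound: $d_{ij}$ is the length of the \emph{shorter} of the two arcs, so the clean linear form of the quadrangle inequality is valid only when the four points involved sit on a common arc shorter than $k$. Edges spanning more than half the cycle, together with the (possibly tied) choice of which arc realizes a shortest path, must be handled explicitly — for instance by always performing each exchange on the shorter side of a chord and checking that the exchange keeps the matching non-crossing and within this regime. Verifying the two exchange inequalities for the prism metric across all ring-assignment patterns, and confirming the span strictly decreases at each step, is the bulk of the remaining work; the rest is bookkeeping on the cyclic order.
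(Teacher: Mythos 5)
Your overall strategy --- local exchanges whose cost comparison is measured against the cyclic-span potential $\sum_{(o_i,o_j)\in M}\min\{|i-j|,2k-|i-j|\}$ --- is the same engine the paper uses, your metric formula and feasibility observations are correct, and your Phase 2 exchange $(o_a,o_b),(o_{a+1},o_{a'})\mapsto(o_a,o_{a+1}),(o_{a'},o_b)$ is essentially the paper's Case 1. But as written the proposal has a genuine gap, and you flag it yourself: Phase 1 (producing a non-crossing optimum) is justified only by a quadrangle inequality requiring all four endpoints to lie on a common arc of length less than $k$, whereas a crossing pair of matching edges will in general straddle the whole cycle --- e.g.\ $(o_0,o_j)$ with $2\le j\le k$ crossed by $(o_1,o_l)$ with $l>k+1$. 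For such pairs neither the statement nor the proof of your inequality applies, and everything downstream (every edge has odd span; the partner of $o_{a+1}$ lies inside the chord) presupposes the non-crossing structure. Deferring this as ``the bulk of the remaining work'' leaves exactly the hard case of the claim unproven; it is the case the paper's proof confronts directly.

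The gap is fillable, and comparing the two routes is instructive. The paper never uncrosses globally: after a cyclic shift it takes a long edge $(o_0,o_j)$, $2\le j\le k$, looks at the partner $o_l$ of $o_1$, and branches on where $l$ sits. If $l\le k+1$ it forms $(o_0,o_1),(o_j,o_l)$ (your Phase 2 move); if $l>k+1$ --- precisely the wraparound situation you defer --- it forms $(o_0,o_l),(o_1,o_j)$ instead, i.e.\ the \emph{other} uncrossing, the one that pairs around the back of the cycle. Both the potential drop and the cost comparison there need only the bounds $d_{ij}\le c(o_i,o_j)\le d_{ij}+1$, not your exact prism formula, and the potential argument gives termination in one stroke. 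Alternatively, your Phase 1 can be completed directly: if the four gaps between consecutive endpoints of a crossing pair are $\alpha,\beta,\gamma,\delta$ (summing to $2k$), then the crossing pair's cyclic sum $\min(\alpha+\beta,\gamma+\delta)+\min(\beta+\gamma,\delta+\alpha)$ always exceeds $\min(\alpha+\gamma,\beta+\delta)$ by at least $2\min(\alpha,\beta,\gamma,\delta)\ge 2$, while the cheaper of the two uncrossings has cyclic sum at most $\min(\alpha+\gamma,\beta+\delta)$; the slack of $2$ absorbs the ring-mismatch terms, and the same computation shows the potential drops, giving termination. Either way, this case analysis is the substance of the claim rather than bookkeeping, so it must appear explicitly for the proof to stand.
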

\begin{proof}
    We will show a transformation from $M$ to a matching in which every odd vertex $o_i$ is either matched to $o_{i-1 \pmod{2k}}$ or $o_{i+1 \pmod{2k}}$. This completes the proof, since then after fixing $(o_0,o_1)$ or $(o_{2k-1},o_0)$ the rest of the matching is uniquely determined as $M_1$ or $M_2$. During the process, we will ensure the cost of the matching never increases, and to ensure it terminates we will argue that the (non-negative) potential function $\sum_{e=(o_i,o_j) \in M} \min\{|i-j|,2k-|i-j|\}$ decreases at every step. Note that this potential function is invariant under any  reindexing corresponding to a cyclic shift of the indices. Furthermore, observe that the metric on the odd vertices is unchanged for any cyclic shift of the indices (to see this, recall that the $k$-donut has edges $\{u_0,u_1\}$ and $\{v_0,v_1\}$). 

   So, suppose $M$ is not yet equal to $M_1$ or $M_2$. Then there is some edge $(o_i, o_j) \in M$ such that $j\not\in \{i-1,i+1 \pmod{2k}\}$. Without loss of generality (by switching the role of $i$ and $j$ if necessary), suppose $j \in \{i+2,i+3,\ldots,i+k \pmod{2k}\}$. Possibly after a cyclic shift of the indices, we can further assume $i = 0$ and $2 \leq j \leq k$. Let $o_l$ be the vertex that $o_1$ is matched to. We consider two cases depending on if $l \leq k+1$ or $l > k+1$.

   \textbf{Case 1:} $l \leq k+1$. In this case, replace the edges $\{\{o_0, o_j\}, \{o_1, o_l\}\}$ with  $\{\{o_0, o_1\}, \{o_j, o_l\}\}$. This decreases the potential function, as the  edges previously contributed $j+l-1$ and now contribute $1+|j-l|$, which is a smaller quantity since $j,l \geq 2$. Moreover this does not increase the cost of the matching: We have $c_{\{o_0, o_1\}} \leq 2$ and $c_{\{o_l, o_j\}} \leq |j-l|+1$, so the two new edges cost at most $|j-l|+3$. On the other hand, the two old edges cost at least $c_{\{o_0, o_j\}} + c_{\{o_1, o_l\}}\geq j+l-1$, which is at least $|j-l|+3$ since $j,l \geq 2$. 

   \textbf{Case 2:} $l > k+1$. In this case, we replace the edges $\{\{o_0, o_j\}, \{o_1, o_l\}\}$ with  $\{\{o_0, o_l\}, \{o_1, o_j\}\}$. This decreases the potential function, as the edges previously contributed $j + (2k-l+1)$ and now they contribute $(2k-l)+(j-1)$. Also, the edges previously cost at least $j + (2k-l+1)$, and now cost at most $(2k-l+1)+j$. Thus the cost of the matching did not increase.
\end{proof}

We now analyze the approximation ratio of the max entropy algorithm without shortcutting.
\begin{lemma}
If $A = T \uplus M$ is the connected Eulerian subgraph computed by the max entropy algorithm on the $k$-donut, then
    $$\lim_{k \to \infty} \frac{\E{c(A)}}{c(\OPT)} = \lim_{k \to \infty} \frac{\E{c(A)}}{c(x)} = 1.375,$$
    where $c(x)$ is the cost of the extreme point solution to the Subtour LP.
\end{lemma}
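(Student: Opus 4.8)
The plan is to split $c(A) = c(T) + c(M)$ and treat the two pieces separately. For the tree, I would first observe that $c(T)$ is in fact \emph{deterministic}: every edge that \Cref{alg} places into $T$ is an edge of the $k$-donut graph and therefore has cost $1$, and a direct count of the edges added across the steps (one edge $e^+$, the $2k$ unit edges on odd $i$, $k$ sampled spoke edges, $k-1$ sampled ring edges on even $i\neq 0$, and the two envelope edges) gives $|T| = 4k+2$. Hence $c(T) = 4k+2 = c(x)$ with probability $1$, and so $\E{c(A)} = (4k+2) + \E{c(M)}$; the whole problem reduces to estimating $\E{c(M)}$.

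Next I would compute $\E{c(M)}$ using \Cref{claim:M1orM2} and \Cref{claim:1odd1even}. Writing $O_i$ for the indicator that $u_i$ is the odd vertex of the pair $(u_i,v_i)$, a short case analysis of shortest-path distances shows that the matching edge $(o_i,o_{i+1})$ costs $1$ when $O_i = O_{i+1}$ (the endpoints lie on the same ring and are joined directly by a ring edge) and $2$ when $O_i \neq O_{i+1}$ (the endpoints lie on opposite rings, at distance $2$). I would explicitly verify that the three special pairs — the wrap-around pair $(o_{2k-1},o_0)$ and the pairs $(o_0,o_1),(o_1,o_2)$ touching the envelope gadget — obey the same rule, since the envelope never provides a shorter route between consecutive pairs. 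Consequently $c(M_1) = k + X_1$ and $c(M_2) = k + X_2$, where $X_1 = \sum_{j=0}^{k-1}\mathbbm{1}[O_{2j}\neq O_{2j+1}]$ and $X_2 = \sum_{j=0}^{k-1}\mathbbm{1}[O_{2j+1}\neq O_{2j+2}]$ (indices mod $2k$), and \Cref{claim:M1orM2} gives $c(M) = k + \min(X_1,X_2)$.

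The heart of the argument is estimating $\E{\min(X_1,X_2)}$. By \Cref{claim:1odd1even} the $O_i$ are mutually independent and fair; since the index pairs $\{2j,2j+1\}$ are pairwise disjoint across $j$ (and likewise $\{2j+1,2j+2\}$, with the wrap-around pair $\{2k-1,0\}$), each of $X_1$ and $X_2$ is a sum of $k$ independent $\mathrm{Bernoulli}(1/2)$ terms, so $X_1,X_2\sim\mathrm{Bin}(k,1/2)$ marginally, each with mean $k/2$. Using the identity $\min(X_1,X_2) = \tfrac12\bigl(X_1 + X_2 - |X_1-X_2|\bigr)$, I obtain
\[
\E{\min(X_1,X_2)} \;=\; \frac{k - \E{|X_1 - X_2|}}{2}.
\]
To control the correction term I would apply the triangle inequality $|X_1-X_2|\le |X_1-\tfrac k2| + |X_2-\tfrac k2|$ together with Jensen's inequality, yielding $\E{|X_i-\tfrac k2|}\le \sqrt{\operatorname{Var}(X_i)} = \tfrac12\sqrt{k}$ and hence $\E{|X_1-X_2|}\le \sqrt{k}$. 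Therefore $\E{\min(X_1,X_2)} = \tfrac k2 - O(\sqrt k)$, so $\E{c(M)} = \tfrac{3k}{2} - O(\sqrt k)$ and $\E{c(A)} = \tfrac{11k}{2} + 2 - O(\sqrt k)$. Dividing by $c(\OPT) = c(x) = 4k+2$ and taking $k\to\infty$ gives both limits equal to $\tfrac{11/2}{4} = \tfrac{11}{8} = 1.375$.

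I expect the only genuinely delicate point to be the middle step: establishing uniformly, over \emph{all} consecutive pairs including those interacting with the envelope gadget and the wrap-around, that the cost of $(o_i,o_{i+1})$ is exactly $1 + \mathbbm{1}[O_i\neq O_{i+1}]$, so that $c(M_1)$ and $c(M_2)$ collapse cleanly into $k+X_1$ and $k+X_2$. Everything downstream is routine: the reduction $\min = \tfrac12(\text{sum} - \text{absolute difference})$ and the $O(\sqrt k)$ fluctuation bound are elementary, and they are more than enough here, since for the limit we only need the far weaker statement $\E{|X_1-X_2|} = o(k)$.
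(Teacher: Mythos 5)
Your proof is correct, and it shares the paper's skeleton: decompose $c(A)=c(T)+c(M)$ with $c(T)=4k+2$ deterministic, invoke \Cref{claim:M1orM2} and \Cref{claim:1odd1even}, observe that each matching edge costs $1$ or $2$ with probability $1/2$ each, and get the upper bound $\E{c(M)}\le \frac32 k$ by Jensen. The genuine difference is in the lower bound, which is the heart of the lemma. The paper proves a high-probability statement: by a Chernoff bound on each of $c(M_1)$ and $c(M_2)$ (this is where the mutual independence from \Cref{claim:1odd1even} is used) and a union bound with $\epsilon=\sqrt{\ln(k)/k}$, both matchings simultaneously cost at least $\left(\frac32-o(1)\right)k$ with probability $1-o(1)$. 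You instead stay entirely at the level of expectations, using the exact identity $\min(X_1,X_2)=\frac12\left(X_1+X_2-|X_1-X_2|\right)$ together with $\E{|X_1-X_2|}\le\E{|X_1-k/2|}+\E{|X_2-k/2|}\le\sqrt{k}$ (Cauchy--Schwarz applied to the variance of $\mathrm{Bin}(k,1/2)$). Your route is more elementary---no exponential tail bounds---quantitatively a bit sharper ($O(\sqrt k)$ error versus the paper's roughly $O(\sqrt{k\ln k})$), and it needs only second-moment information, so pairwise independence of the indicators within each matching would suffice; the paper's route costs more but yields the stronger structural fact that both matchings are simultaneously expensive with high probability. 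Your one flagged ``delicate point''---that $c_{\{o_i,o_{i+1}\}}=1+\mathbbm{1}[O_i\ne O_{i+1}]$ for every consecutive pair, including the wrap-around pair and the two pairs touching the envelope gadget---is indeed the right thing to verify, and your sketch is sound: consecutive same-ring vertices are always adjacent (cost $1$), while cross-ring consecutive vertices are at distance exactly $2$, since the detour through $w_0,w_1$ has length $3$ and never helps. The paper asserts this same distance fact without detailed verification, so your added care there is a plus, not a gap.
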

\begin{proof}
    By construction, $c(x) = 4k+2$. Since the $k$-donut is Hamiltonian, we also have that the optimal tour has length $4k+2$. The cost of the Eulerian subgraph is $c(A) = c(T) + c(M)$, where $T$ is the 1-tree and $M$ is the matching. Note that the cost of the 1-tree is always $4k+2$. Finally, we know that $c(M) = \min\{c(M_1), c(M_2)\}$ from the previous claim.
    Thus, it suffices to reason about the cost of $M_1$ and $M_2$. We know that for every $i$, $c_{\{o_i,o_{i+1 \pmod{2k}}\}}=2$ with probability 1/2 and 1 otherwise, using \Cref{claim:1odd1even}. Thus, the expected cost of each edge in $M_1$ and $M_2$ is 1.5. Since each matching consists of $k$ edges, by linearity of expectation, $\E{c(M_1)} = \E{c(M_2)} = 1.5k$. By Jensen's inequality (since $\mathrm{min}$ is a concave function), this implies $\E{c(M)} \leq 1.5k$. This immediately gives an upper bound on the approximation ratio of $\frac{4k+2+1.5k}{4k+2} = 1.375-o(1)$. In the remainder we prove the lower bound.

    For each $i$, construct a random variable $X_i$ indicating if $c_{\{o_i,o_{i+1 \pmod{2k}}\}} = 2$. By \Cref{claim:1odd1even}, the variables $\{X_0, X_1, \ldots\ X_{2k-1}\}$ are mutually independent. To analyze the cost of $M_1$, we define $\mu = \mathbb{E}[\sum_{i=0}^{k-1} X_{2i}] = k/2.$ Then we have
    \begin{align*}
        \P{c(M_1) \ge \left(\frac{3}{2}-\epsilon\right)k} &= \P{\sum_{i=0}^{k-1} X_{2i} \ge \left(\frac{1}{2}-\epsilon\right)k} \\
        &\ge 1-\P{\sum_{i=0}^{k-1} X_{2i} \le (1-2\epsilon) \mu} \\
        &\ge 1-e^{-2\epsilon^2k/3}.
    \end{align*}
    In the last line we used the mutual independence of the $X_0,\dots,X_{2k-1}$ to apply a Chernoff bound. Now, choosing $\epsilon = \sqrt{\frac{\ln(k)}{k}}$ and applying a union bound (the same bound applies to $M_2$), we obtain that the probability that both matchings cost at least $(\frac{3}{2}-o(1))k$ is at least $1-o(1)$, where $o(1)$ is a quantity that goes to 0 as $k \to\infty$. The expected cost of the matching is therefore at least $$(1-o(1))\left(\frac{3}{2}-o(1)\right)k = \left(\frac{3}{2}-o(1)\right)k.$$ Since the cost of the 1-tree is always $4k+2$, we obtain an expected cost of $(\frac{11}{2}-o(1))k$ with an approximation ratio of
    $$\frac{\E{c(T \uplus M)}}{OPT} = \frac{(\frac{11}{2}-o(1))k}{OPT} = \frac{(\frac{11}{2}-o(1))k}{4k+2} = \frac{11}{8}-o(1),$$
    which goes to $\frac{11}{8}$ as $k \to \infty$.
\end{proof}

\section{Shortcutting}
\label{sec:shortcut}

So far, we have shown that the expected cost of the connected Eulerian subgraph returned by the max entropy algorithm is 1.375 times that of the optimal tour. However, after shortcutting the Eulerian subgraph to a Hamiltonian cycle, its cost may decrease. Ideally, we would like a lower bound on the cost of the tour after shortcutting. One challenge with this is that the same Eulerian subgraph can be shortcut to different Hamiltonian cycles with different costs, depending on which Eulerian tour is used for the shortcutting. What we will show in this section is that there is always \emph{some} bad Eulerian tour of the connected Eulerian subgraph, whose cost does not go down after shortcutting. 
We highlight two important aspects of this analysis:
\begin{enumerate}
    \item In the analysis in \Cref{sec:analysis}, we did not require  the chosen matching to be either $M_1$ or $M_2$. Thus, we lower bounded the cost of the Eulerian subgraph for any procedure that obtains a minimum-cost matching (or $T$-join). Here we will require that the matching algorithm always selects $M_1$ or $M_2$. From \Cref{claim:M1orM2}, we know one of these matchings is a candidate for the minimum-cost matching. However, there may be others. Therefore, we only lower bound the shortcutting for a specific choice of the minimum-cost matching. 
\item 
Similar to above, we only lower bound the shortcutting for a specific Eulerian tour of the Eulerian subgraph.  Indeed, our lower bound only holds for a small fraction of Eulerian tours. 
\end{enumerate}
We remark that the max entropy algorithm as described in e.g. \cite{OveisGharanSS11,KKO21} does not specify the manner in which a minimum-cost matching or an Eulerian tour is generated. Therefore, our lower bound holds for the general description 
of the algorithm. Thus, despite the caveats, this section successfully demonstrates our main result:
The max entropy algorithm is not a 4/3-approximation algorithm. 

In the rest of this section, we will first describe some properties of the 1-tree that will be useful in the analysis. Then, we consider Eulerian graphs resulting from adding $M_1$ and construct Eulerian tours whose costs do not decrease after shortcutting. We then do the same for the graphs resulting from adding $M_2$. These two statements together complete the proof. 

\subsection{The Structure of the Tree}

The 1-tree $T$ will take all edges of the form $(u_i,u_{i+1})$ and all edges $(v_i,v_{i+1})$ for all odd $i$, since these edges $e$ have $x_e = 1$. 
Thus, we begin by taking every other edge in the outer ring and doing the same in the inner ring.
We then proceed to take exactly one of the edges $\{\{u_i,v_i\}, \{ u_{i+1}, v_{i+1}\}\}$ for every odd $i$. Thus every block $i$ of four vertices $u_i,v_i,u_{i+1},v_{i+1}$ for odd $i$ now has three edges, and there are exactly two possibilities. We call block $i$ a 0-block if we picked edge $\{u_i,v_i\}$ and a 1-block otherwise, as seen in \Cref{fig:0-1-block}.

\begin{figure}[htb!]
    \centering 
    \begin{tikzpicture}
    \node[fill=black, circle, inner sep=1.5pt, label={[label distance=-0.15cm, below]:$v_{i}$}] (A) at (0,0) {};
    \node[fill=black, circle, inner sep=1.5pt, label={[label distance=-0.15cm, below]:$v_{i+1}$}] (B) at (1,0) {};
    \node[fill=black, circle, inner sep=1.5pt, label={[label distance=-0.05cm, above]:$u_{i+1}$}] (C) at (1,1) {};
    \node[fill=black, circle, inner sep=1.5pt, label={[label distance=-0.05cm, above]:$u_{i}$}] (D) at (0,1) {};

    \draw (C) -- (D) -- (A) -- (B);
\end{tikzpicture}\hspace*{25mm}
    \begin{tikzpicture}
     \node[fill=black, circle, inner sep=1.5pt, label={[label distance=-0.15cm, below]:$v_{i}$}] (A) at (0,0) {};
    \node[fill=black, circle, inner sep=1.5pt, label={[label distance=-0.15cm, below]:$v_{i+1}$}] (B) at (1,0) {};
    \node[fill=black, circle, inner sep=1.5pt, label={[label distance=-0.05cm, above]:$u_{i+1}$}] (C) at (1,1) {};
    \node[fill=black, circle, inner sep=1.5pt, label={[label distance=-0.05cm, above]:$u_{i}$}] (D) at (0,1) {};

    \draw (A) -- (B) -- (C) -- (D);
\end{tikzpicture}\caption{On the left is a 0-block, on the right is a 1-block. Note that $i$ is odd.}\label{fig:0-1-block}
\end{figure}

There are therefore four possibilities for the composition of two adjacent blocks: 00, 01, 10, and 11, as visualized in \Cref{fig:blocktypes}. This will arise in the analysis of $M_1$. 

\begin{figure}[htb!]
    \centering 

    \begin{tikzpicture}
    \node[fill=black, circle, inner sep=1.5pt] (A) at (0,0) {};
    \node[fill=black, circle, inner sep=1.5pt] (B) at (1,0) {};
    \node[fill=black, circle, inner sep=1.5pt] (C) at (1,1) {};
    \node[fill=black, circle, inner sep=1.5pt] (D) at (0,1) {};

    \draw (C) -- (D) -- (A) -- (B);
\end{tikzpicture}\hspace*{10mm}
    \begin{tikzpicture}
    \node[fill=black, circle, inner sep=1.5pt] (A) at (0,0) {};
    \node[fill=black, circle, inner sep=1.5pt] (B) at (1,0) {};
    \node[fill=black, circle, inner sep=1.5pt] (C) at (1,1) {};
    \node[fill=black, circle, inner sep=1.5pt] (D) at (0,1) {};

   \draw (C) -- (D) -- (A) -- (B);
\end{tikzpicture}\hspace{25mm}
    \begin{tikzpicture}
    \node[fill=black, circle, inner sep=1.5pt] (A) at (0,0) {};
    \node[fill=black, circle, inner sep=1.5pt] (B) at (1,0) {};
    \node[fill=black, circle, inner sep=1.5pt] (C) at (1,1) {};
    \node[fill=black, circle, inner sep=1.5pt] (D) at (0,1) {};

    \draw (C) -- (D) -- (A) -- (B);
\end{tikzpicture}\hspace*{10mm}
    \begin{tikzpicture}
    \node[fill=black, circle, inner sep=1.5pt] (A) at (0,0) {};
    \node[fill=black, circle, inner sep=1.5pt] (B) at (1,0) {};
    \node[fill=black, circle, inner sep=1.5pt] (C) at (1,1) {};
    \node[fill=black, circle, inner sep=1.5pt] (D) at (0,1) {};

    \draw (A) -- (B) -- (C) -- (D);
\end{tikzpicture}\vspace*{10mm}

    \begin{tikzpicture}
    \node[fill=black, circle, inner sep=1.5pt] (A) at (0,0) {};
    \node[fill=black, circle, inner sep=1.5pt] (B) at (1,0) {};
    \node[fill=black, circle, inner sep=1.5pt] (C) at (1,1) {};
    \node[fill=black, circle, inner sep=1.5pt] (D) at (0,1) {};

    \draw (A) -- (B) -- (C) -- (D);
\end{tikzpicture}\hspace*{10mm}
    \begin{tikzpicture}
    \node[fill=black, circle, inner sep=1.5pt] (A) at (0,0) {};
    \node[fill=black, circle, inner sep=1.5pt] (B) at (1,0) {};
    \node[fill=black, circle, inner sep=1.5pt] (C) at (1,1) {};
    \node[fill=black, circle, inner sep=1.5pt] (D) at (0,1) {};

    \draw (C) -- (D) -- (A) -- (B);
\end{tikzpicture}\hspace{25mm}
    \begin{tikzpicture}
    \node[fill=black, circle, inner sep=1.5pt] (A) at (0,0) {};
    \node[fill=black, circle, inner sep=1.5pt] (B) at (1,0) {};
    \node[fill=black, circle, inner sep=1.5pt] (C) at (1,1) {};
    \node[fill=black, circle, inner sep=1.5pt] (D) at (0,1) {};

    \draw (A) -- (B) -- (C) -- (D);
\end{tikzpicture}\hspace*{10mm}
    \begin{tikzpicture}
    \node[fill=black, circle, inner sep=1.5pt] (A) at (0,0) {};
    \node[fill=black, circle, inner sep=1.5pt] (B) at (1,0) {};
    \node[fill=black, circle, inner sep=1.5pt] (C) at (1,1) {};
    \node[fill=black, circle, inner sep=1.5pt] (D) at (0,1) {};

    \draw (A) -- (B) -- (C) -- (D);
\end{tikzpicture}\caption{The four configurations. Top left is 00, top right is 01, bottom left is 10, bottom right is 11.}\label{fig:blocktypes}
\end{figure}

\bedit{For every even $i \neq 0$}, we then take one edge from the pair $\{\{u_i,u_{i+1}\}, \{v_i,v_{i+1}\}\}$. Thus, \bedit{except for the pair of blocks containing $u_0$ and $u_1$ respectively}, each pair of adjacent blocks can be joined in two possible ways,
creating eight possible configurations for adjacent blocks. This will be used in understanding the Eulerian subgraph resulting from adding $M_1$. $M_2$ will only involve a single block, and thus is simpler.

We will now examine the Eulerian graphs resulting from adding $M_1$ or $M_2$. $M_1$ will add edges between vertices of adjacent blocks, and $M_2$ will add edges between vertices in the same block. Thus, they create very different structures. 

\subsection{Bad Tours on $M_1$}


We begin by considering the case where the matching being added is $M_1$. Recall $M_1 = \{(o_0,o_1),\allowbreak(o_2,o_3),\allowbreak\dots,\allowbreak(o_{2k-2},o_{2k-1})\}$, where $o_i$ is the vertex in $\{u_i, v_i\}$ with odd degree in the tree. As noted previously, this means that the matching edges are added \textit{between vertices of adjacent blocks}. This creates graphs of the type seen in \Cref{fig:type1}. 

\begin{figure}[htb!]
\centering
\begin{tikzpicture}[node distance=2cm, minimum size=4pt, every node/.style={circle, draw}, scale=0.9]
	\node[circle, inner sep=1pt] (node0) at (2.25, 0.0) {};
\node[circle, inner sep=1pt] (node16) at (3.0, 0.0) {};
\node[circle, inner sep=1pt] (node1) at (1.1250000000000004, 1.948557158514987) {};
\node[circle, inner sep=1pt] (node17) at (1.5000000000000004, 2.598076211353316) {};
\node[circle, inner sep=1pt] (node2) at (0.39070839975059346, 2.215817444277468) {};
\node[circle, inner sep=1pt] (node18) at (0.5209445330007912, 2.954423259036624) {};
\node[circle, inner sep=1pt] (node3) at (-0.3907083997505932, 2.215817444277468) {};
\node[circle, inner sep=1pt] (node19) at (-0.5209445330007909, 2.954423259036624) {};
\node[circle, inner sep=1pt] (node4) at (-1.1249999999999996, 1.948557158514987) {};
\node[circle, inner sep=1pt] (node20) at (-1.4999999999999996, 2.598076211353316) {};
\node[circle, inner sep=1pt] (node5) at (-1.7235999970177003, 1.4462721217947139) {};
\node[circle, inner sep=1pt] (node21) at (-2.2981333293569337, 1.9283628290596184) {};
\node[circle, inner sep=1pt] (node6) at (-2.1143083967682936, 0.769545322482755) {};
\node[circle, inner sep=1pt] (node22) at (-2.819077862357725, 1.0260604299770066) {};
\node[circle, inner sep=1pt] (node7) at (-2.25, 2.755455298081545e-16) {};
\node[circle, inner sep=1pt] (node23) at (-3.0, 3.6739403974420594e-16) {};
\node[circle, inner sep=1pt] (node8) at (-2.114308396768294, -0.7695453224827544) {};
\node[circle, inner sep=1pt] (node24) at (-2.8190778623577253, -1.026060429977006) {};
\node[circle, inner sep=1pt] (node9) at (-1.7235999970177005, -1.4462721217947134) {};
\node[circle, inner sep=1pt] (node25) at (-2.298133329356934, -1.9283628290596178) {};
\node[circle, inner sep=1pt] (node10) at (-1.1250000000000009, -1.9485571585149863) {};
\node[circle, inner sep=1pt] (node26) at (-1.5000000000000013, -2.598076211353315) {};
\node[circle, inner sep=1pt] (node11) at (-0.39070839975059324, -2.215817444277468) {};
\node[circle, inner sep=1pt] (node27) at (-0.520944533000791, -2.954423259036624) {};
\node[circle, inner sep=1pt] (node12) at (0.39070839975059246, -2.2158174442774685) {};
\node[circle, inner sep=1pt] (node28) at (0.5209445330007899, -2.9544232590366244) {};
\node[circle, inner sep=1pt] (node13) at (1.1250000000000004, -1.948557158514987) {};
\node[circle, inner sep=1pt] (node29) at (1.5000000000000004, -2.598076211353316) {};
\node[circle, inner sep=1pt] (node14) at (1.7235999970177, -1.4462721217947139) {};
\node[circle, inner sep=1pt] (node30) at (2.2981333293569333, -1.9283628290596186) {};
\node[circle, inner sep=1pt] (node15) at (2.114308396768293, -0.7695453224827563) {};
\node[circle, inner sep=1pt] (node31) at (2.8190778623577244, -1.0260604299770084) {};
\node[circle, inner sep=1pt, fill=black] (node32) at (2.4666931295630095, 0.8978028762298803) {};
\node[circle, inner sep=1pt, fill=black] (node33) at (2.0108666631873175, 1.6873174754271654) {};
\draw[color=black] (node1) -- (node2);
\draw[color=black] (node1) -- (node17);
\draw[color=red, dashed] (node2) -- (node3);
\draw[color=black] (node17) -- (node18);
\draw[color=black] (node17) -- (node33);
\draw[color=red, dashed] (node17) -- (node0);
\draw[color=black] (node18) -- (node19);
\draw[color=black] (node3) -- (node4);
\draw[color=black] (node4) -- (node20);
\draw[color=black] (node19) -- (node20);
\draw[color=black] (node20) -- (node21);
\draw[color=red,dashed] (node20) to[bend left=30] (node21);
\draw[color=black] (node5) -- (node6);
\draw[color=black] (node5) -- (node21);
\draw[color=red, dashed] (node6) -- (node23);
\draw[color=black] (node21) -- (node22);
\draw[color=black] (node22) -- (node23);
\draw[color=black] (node7) -- (node8);
\draw[color=black] (node7) -- (node23);
\draw[color=red, dashed] (node8) -- (node9);
\draw[color=black] (node23) -- (node24);
\draw[color=black] (node24) -- (node25);
\draw[color=black] (node9) -- (node10);
\draw[color=black] (node10) -- (node26);
\draw[color=black] (node25) -- (node26);
\draw[color=black] (node26) -- (node27);
\draw[color=red,dashed] (node26) to[bend left=30] (node27);
\draw[color=black] (node11) -- (node12);
\draw[color=black] (node11) -- (node27);
\draw[color=red, dashed] (node12) -- (node13);
\draw[color=black] (node27) -- (node28);
\draw[color=black] (node28) -- (node29);
\draw[color=black] (node13) -- (node14);
\draw[color=black] (node14) -- (node30);
\draw[color=black] (node29) -- (node30);
\draw[color=black] (node30) -- (node31);
\draw[color=red,dashed] (node30) to[bend left=30] (node31);
\draw[color=black] (node15) -- (node0);
\draw[color=black] (node15) -- (node31);
\draw[color=black] (node31) -- (node16);
\draw[color=black] (node16) -- (node32);
\draw[color=black] (node32) -- (node33);
\end{tikzpicture}\hspace*{10mm}
\begin{tikzpicture}[node distance=2cm, minimum size=14pt, every node/.style={circle, draw},scale=0.9]
\node[circle, inner sep=1pt] (node1) at (1.1250000000000004, 1.948557158514987) {$ 1 $};
\node[circle, inner sep=1pt] (node17) at (1.5000000000000004, 2.598076211353316) {$ 2 $};
\node[circle, inner sep=1pt] (node33) at (2.0108666631873175, 1.6873174754271654) {$ 3 $};
\node[circle, inner sep=1pt] (node32) at (2.4666931295630095, 0.8978028762298803) {$ 4 $};
\node[circle, inner sep=1pt] (node16) at (3.0, 0.0) {$ 5 $};
\node[circle, inner sep=1pt] (node31) at (2.8190778623577244, -1.0260604299770084) {$ 6 $};
\node[circle, inner sep=1pt] (node30) at (2.2981333293569333, -1.9283628290596186) {$ 7 $};
\node[circle, inner sep=1pt] (node14) at (1.7235999970177, -1.4462721217947139) {$ 8 $};
\node[circle, inner sep=1pt] (node13) at (1.1250000000000004, -1.948557158514987) {$ 9 $};
\node[circle, inner sep=1pt] (node12) at (0.39070839975059246, -2.2158174442774685) {$ 10 $};
\node[circle, inner sep=1pt] (node11) at (-0.39070839975059324, -2.215817444277468) {$ 11 $};
\node[circle, inner sep=1pt] (node27) at (-0.520944533000791, -2.954423259036624) {$ 12 $};
\node[circle, inner sep=1pt] (node26) at (-1.5000000000000013, -2.598076211353315) {$ 13 $};
\node[circle, inner sep=1pt] (node25) at (-2.298133329356934, -1.9283628290596178) {$ 14 $};
\node[circle, inner sep=1pt] (node24) at (-2.8190778623577253, -1.026060429977006) {$ 15 $};
\node[circle, inner sep=1pt] (node23) at (-3.0, 3.6739403974420594e-16) {$ 16 $};
\node[circle, inner sep=1pt] (node6) at (-2.1143083967682936, 0.769545322482755) {$ 17 $};
\node[circle, inner sep=1pt] (node5) at (-1.7235999970177003, 1.4462721217947139) {$ 18 $};
\node[circle, inner sep=1pt] (node21) at (-2.2981333293569337, 1.9283628290596184) {$ 19 $};
\node[circle, inner sep=1pt] (node20) at (-1.4999999999999996, 2.598076211353316) {$ 20 $};
\node[circle, inner sep=1pt] (node19) at (-0.5209445330007909, 2.954423259036624) {$ 21 $};
\node[circle, inner sep=1pt] (node18) at (0.5209445330007912, 2.954423259036624) {$ 22 $};
\node[circle, inner sep=1pt] (node0) at (2.25, 0.0) {$ 23 $};
\node[circle, inner sep=1pt] (node15) at (2.114308396768293, -0.7695453224827563) {$ 24 $};
\node[circle, inner sep=1pt] (node29) at (1.5000000000000004, -2.598076211353316) {$ 25 $};
\node[circle, inner sep=1pt] (node28) at (0.5209445330007899, -2.9544232590366244) {$ 26 $};
\node[circle, inner sep=1pt] (node10) at (-1.1250000000000009, -1.9485571585149863) {$ 27 $};
\node[circle, inner sep=1pt] (node9) at (-1.7235999970177005, -1.4462721217947134) {$ 28 $};
\node[circle, inner sep=1pt] (node8) at (-2.114308396768294, -0.7695453224827544) {$ 29 $};
\node[circle, inner sep=1pt] (node7) at (-2.25, 2.755455298081545e-16) {$ 30 $};
\node[circle, inner sep=1pt] (node22) at (-2.819077862357725, 1.0260604299770066) {$ 31 $};
\node[circle, inner sep=1pt] (node4) at (-1.1249999999999996, 1.948557158514987) {$ 32 $};
\node[circle, inner sep=1pt] (node3) at (-0.3907083997505932, 2.215817444277468) {$ 33 $};
\node[circle, inner sep=1pt] (node2) at (0.39070839975059346, 2.215817444277468) {$ 34 $};
\draw[color=black] (node1) -- (node2);
\draw[color=black] (node1) -- (node17);
\draw[color=red, dashed] (node2) -- (node3);
\draw[color=black] (node17) -- (node18);
\draw[color=black] (node17) -- (node33);
\draw[color=red, dashed] (node17) -- (node0);
\draw[color=black] (node18) -- (node19);
\draw[color=black] (node3) -- (node4);
\draw[color=black] (node4) -- (node20);
\draw[color=black] (node19) -- (node20);
\draw[color=black] (node20) -- (node21);
\draw[color=red,dashed] (node20) to[bend left=30] (node21);
\draw[color=black] (node5) -- (node6);
\draw[color=black] (node5) -- (node21);
\draw[color=red, dashed] (node6) -- (node23);
\draw[color=black] (node21) -- (node22);
\draw[color=black] (node22) -- (node23);
\draw[color=black] (node7) -- (node8);
\draw[color=black] (node7) -- (node23);
\draw[color=red, dashed] (node8) -- (node9);
\draw[color=black] (node23) -- (node24);
\draw[color=black] (node24) -- (node25);
\draw[color=black] (node9) -- (node10);
\draw[color=black] (node10) -- (node26);
\draw[color=black] (node25) -- (node26);
\draw[color=black] (node26) -- (node27);
\draw[color=red,dashed] (node26) to[bend left=30] (node27);
\draw[color=black] (node11) -- (node12);
\draw[color=black] (node11) -- (node27);
\draw[color=red, dashed] (node12) -- (node13);
\draw[color=black] (node27) -- (node28);
\draw[color=black] (node28) -- (node29);
\draw[color=black] (node13) -- (node14);
\draw[color=black] (node14) -- (node30);
\draw[color=black] (node29) -- (node30);
\draw[color=black] (node30) -- (node31);
\draw[color=red,dashed] (node30) to[bend left=30] (node31);
\draw[color=black] (node15) -- (node0);
\draw[color=black] (node15) -- (node31);
\draw[color=black] (node31) -- (node16);
\draw[color=black] (node16) -- (node32);
\draw[color=black] (node32) -- (node33);
\end{tikzpicture}
\caption{On the left is an example Eulerian graph when $M_1$ (dotted red edges) is added. The special nodes $w_0,w_1$ are marked in black. The graph consists of circuits of length 2, 5 or 8, except for the special circuit containing $w_0w_1$, which has length 4, 7, or 10. On the right is the Hamiltonian cycle resulting from shortcutting the adversarial Eulerian tour we construct here, in which we always alternate the side of the circuit we traverse. One can check that every shortcutting operation does not decrease the cost.}\label{fig:type1}
\end{figure}

\bedit{First, consider two adjacent blocks that are not adjacent to the envelope gadget. As discussed, there are eight possible configurations for the structure of the tree on these blocks.} For conciseness, we consider only the cases in which the edge on the outer ring is chosen to join the blocks, i.e. $(u_{2j},u_{2j+1})$ and not $(v_{2j},v_{2j+1})$. The other case is analyzed analogously up to a symmetry. This leaves four cases, exactly corresponding to the four possible orientations of the blocks $2j-1$ and $2j+1$ as discussed above. The four cases are illustrated in \Cref{fig:four_configs}.

\begin{figure}[htb!]
    \centering 
\begin{tikzpicture}
    \node[fill=black, circle, inner sep=1.5pt] (A) at (0,0) {};
    \node[fill=red, circle, inner sep=1.5pt] (B) at (1,0) {};
    \node[fill=black, circle, inner sep=1.5pt,label={[label distance=-0.05cm, above]:$u_{2j}$}] (C) at (1,1) {};
    \node[fill=black, circle, inner sep=1.5pt] (D) at (0,1) {};

    \node[fill=black, circle, inner sep=1.5pt] (A') at (2,0) {};
    \node[fill=black, circle, inner sep=1.5pt] (B') at (3,0) {};
    \node[fill=black, circle, inner sep=1.5pt] (C') at (3,1) {};
    \node[fill=red, circle, inner sep=1.5pt,label={[label distance=-0.05cm, above]:$u_{2j+1}$}] (D') at (2,1) {};

    \draw (C) -- (D) -- (A) -- (B);
    \draw (C') -- (D') -- (A') -- (B');
    \draw[blue, thick] (C) -- (D');
    \draw[red,dashed] (B) -- (D');
\end{tikzpicture}\hspace{25mm}
\begin{tikzpicture}
    \node[fill=black, circle, inner sep=1.5pt] (A) at (0,0) {};
    \node[fill=red, circle, inner sep=1.5pt] (B) at (1,0) {};
    \node[fill=black, circle, inner sep=1.5pt,label={[label distance=-0.05cm, above]:$u_{2j}$}] (C) at (1,1) {};
    \node[fill=black, circle, inner sep=1.5pt] (D) at (0,1) {};

    \node[fill=red, circle, inner sep=1.5pt] (A') at (2,0) {};
    \node[fill=black, circle, inner sep=1.5pt] (B') at (3,0) {};
    \node[fill=black, circle, inner sep=1.5pt] (C') at (3,1) {};
    \node[fill=black, circle, inner sep=1.5pt,label={[label distance=-0.05cm, above]:$u_{2j+1}$}] (D') at (2,1) {};

    \draw (C) -- (D) -- (A) -- (B);
    \draw (A') -- (B') -- (C') -- (D');
    \draw[blue, thick] (C) -- (D');
    \draw[red,dashed] (B) -- (A');
\end{tikzpicture}

\vspace*{10mm}
\begin{tikzpicture}
    \node[fill=black, circle, inner sep=1.5pt] (A) at (0,0) {};
    \node[fill=black, circle, inner sep=1.5pt] (B) at (1,0) {};
    \node[fill=red, circle, inner sep=1.5pt,label={[label distance=-0.05cm, above]:$u_{2j}$}] (C) at (1,1) {};
    \node[fill=black, circle, inner sep=1.5pt] (D) at (0,1) {};

    \node[fill=black, circle, inner sep=1.5pt] (A') at (2,0) {};
    \node[fill=black, circle, inner sep=1.5pt] (B') at (3,0) {};
    \node[fill=black, circle, inner sep=1.5pt] (C') at (3,1) {};
    \node[fill=red, circle, inner sep=1.5pt,label={[label distance=-0.05cm, above]:$u_{2j+1}$}] (D') at (2,1) {};

    \draw (A) -- (B) -- (C) -- (D);
    \draw (C') -- (D') -- (A') -- (B');
    \draw[blue, thick] (C) -- (D');
    \draw[color=red,dashed] (C) to[bend right=30] (D');
\end{tikzpicture}\hspace{25mm}
\begin{tikzpicture}
    \node[fill=black, circle, inner sep=1.5pt] (A) at (0,0) {};
    \node[fill=black, circle, inner sep=1.5pt] (B) at (1,0) {};
    \node[fill=red, circle, inner sep=1.5pt,label={[label distance=-0.05cm, above]:$u_{2j}$}] (C) at (1,1) {};
    \node[fill=black, circle, inner sep=1.5pt] (D) at (0,1) {};

    \node[fill=red, circle, inner sep=1.5pt] (A') at (2,0) {};
    \node[fill=black, circle, inner sep=1.5pt] (B') at (3,0) {};
    \node[fill=black, circle, inner sep=1.5pt] (C') at (3,1) {};
    \node[fill=black, circle, inner sep=1.5pt,label={[label distance=-0.05cm, above]:$u_{2j+1}$}] (D') at (2,1) {};

    \draw (A) -- (B) -- (C) -- (D);
    \draw (A') -- (B') -- (C') -- (D');
    \draw[blue, thick] (C) -- (D');
    \draw[red,dashed] (C) -- (A');
\end{tikzpicture}\caption{\bedit{The possible configurations in which two adjacent blocks can be connected given that $(u_{2j},u_{2j+1})$ is in the 1-tree. Here, $j \neq 0$.} The odd nodes, $o_{2j},o_{2j+1}$ are highlighted in red, and in dashed red is the edge $M_1$ will add.}
\label{fig:four_configs}
\end{figure}

\begin{figure}[htb!]
    \centering 
\begin{tikzpicture}
    \node[fill=black, circle, inner sep=1.5pt] (A) at (0,0) {};
    \node[fill=red, circle, inner sep=1.5pt] (B) at (1,0) {};
    \node[fill=black, circle, inner sep=1.5pt,label={[label distance=-0.05cm, above]:$u_{0}$}] (C) at (1,1) {};
    \node[fill=black, circle, inner sep=1.5pt] (D) at (0,1) {};

    \node[fill=black, circle, inner sep=1.5pt] (A') at (3,0) {};
    \node[fill=black, circle, inner sep=1.5pt] (B') at (4,0) {};
    \node[fill=black, circle, inner sep=1.5pt] (C') at (4,1) {};
    \node[fill=red, circle, inner sep=1.5pt,label={[label distance=-0.05cm, above]:$u_{1}$}] (D') at (3,1) {};

    \node[fill=black, circle, inner sep=1.5pt] (w0) at (1.5, 0.5) {};
    \node[fill=black, circle, inner sep=1.5pt] (w1) at (2.5, 0.5) {};

    \draw (w0) -- (w1);
    \draw (C) -- (D) -- (A) -- (B);
    \draw (C') -- (D') -- (A') -- (B');
    \draw[blue, thick] (C) -- (w0);
    \draw[blue,thick] (w1) -- (D');
    \draw[color=red,dashed] (B) to[bend right=45] (D');
\end{tikzpicture}\hspace{25mm}
\begin{tikzpicture}
    \node[fill=black, circle, inner sep=1.5pt] (A) at (0,0) {};
    \node[fill=red, circle, inner sep=1.5pt] (B) at (1,0) {};
    \node[fill=black, circle, inner sep=1.5pt,label={[label distance=-0.05cm, above]:$u_{0}$}] (C) at (1,1) {};
    \node[fill=black, circle, inner sep=1.5pt] (D) at (0,1) {};

    \node[fill=red, circle, inner sep=1.5pt] (A') at (3,0) {};
    \node[fill=black, circle, inner sep=1.5pt] (B') at (4,0) {};
    \node[fill=black, circle, inner sep=1.5pt] (C') at (4,1) {};
    \node[fill=black, circle, inner sep=1.5pt,label={[label distance=-0.05cm, above]:$u_{1}$}] (D') at (3,1) {};

    \node[fill=black, circle, inner sep=1.5pt] (w0) at (1.5, 0.5) {};
    \node[fill=black, circle, inner sep=1.5pt] (w1) at (2.5, 0.5) {};

    \draw (w0) -- (w1);
    \draw (C) -- (D) -- (A) -- (B);
    \draw (A') -- (B') -- (C') -- (D');
    \draw[blue, thick] (C) -- (w0);
    \draw[blue,thick] (w1) -- (D');    \draw[red,dashed] (B) -- (A');
\end{tikzpicture}

\vspace*{5mm}
\begin{tikzpicture}
    \node[fill=black, circle, inner sep=1.5pt] (A) at (0,0) {};
    \node[fill=black, circle, inner sep=1.5pt] (B) at (1,0) {};
    \node[fill=red, circle, inner sep=1.5pt,label={[label distance=-0.05cm, above]:$u_{0}$}] (C) at (1,1) {};
    \node[fill=black, circle, inner sep=1.5pt] (D) at (0,1) {};

    \node[fill=black, circle, inner sep=1.5pt] (A') at (3,0) {};
    \node[fill=black, circle, inner sep=1.5pt] (B') at (4,0) {};
    \node[fill=black, circle, inner sep=1.5pt] (C') at (4,1) {};
    \node[fill=red, circle, inner sep=1.5pt,label={[label distance=-0.05cm, above]:$u_{1}$}] (D') at (3,1) {};

    \node[fill=black, circle, inner sep=1.5pt] (w0) at (1.5, 0.5) {};
    \node[fill=black, circle, inner sep=1.5pt] (w1) at (2.5, 0.5) {};

    \draw (w0) -- (w1);
    \draw (A) -- (B) -- (C) -- (D);
    \draw (C') -- (D') -- (A') -- (B');
    \draw[blue, thick] (C) -- (w0);
    \draw[blue,thick] (w1) -- (D');
    \draw[color=red,dashed] (C) -- (D');
\end{tikzpicture}\hspace{25mm}
\begin{tikzpicture}
    \node[fill=black, circle, inner sep=1.5pt] (A) at (0,0) {};
    \node[fill=black, circle, inner sep=1.5pt] (B) at (1,0) {};
    \node[fill=red, circle, inner sep=1.5pt,label={[label distance=-0.05cm, above]:$u_{0}$}] (C) at (1,1) {};
    \node[fill=black, circle, inner sep=1.5pt] (D) at (0,1) {};

    \node[fill=red, circle, inner sep=1.5pt] (A') at (3,0) {};
    \node[fill=black, circle, inner sep=1.5pt] (B') at (4,0) {};
    \node[fill=black, circle, inner sep=1.5pt] (C') at (4,1) {};
    \node[fill=black, circle, inner sep=1.5pt,label={[label distance=-0.05cm, above]:$u_{1}$}] (D') at (3,1) {};

    \node[fill=black, circle, inner sep=1.5pt] (w0) at (1.5, 0.5) {};
    \node[fill=black, circle, inner sep=1.5pt] (w1) at (2.5, 0.5) {};

    \draw (w0) -- (w1);
    \draw (D) -- (C) -- (B) -- (A);
    \draw (D') -- (C') -- (B') -- (A');
    \draw[blue, thick] (C) -- (w0);
    \draw[blue,thick] (w1) -- (D');    \draw[red,dashed] (C) to[bend right=45] (A');
\end{tikzpicture}

\vspace*{10mm}

\begin{tikzpicture}
    \node[fill=black, circle, inner sep=1.5pt] (A) at (0,0) {};
    \node[fill=red, circle, inner sep=1.5pt] (B) at (1,0) {};
    \node[fill=black, circle, inner sep=1.5pt,label={[label distance=-0.05cm, above]:$u_{0}$}] (C) at (1,1) {};
    \node[fill=black, circle, inner sep=1.5pt] (D) at (0,1) {};

    \node[fill=red, circle, inner sep=1.5pt] (A') at (3,0) {};
    \node[fill=black, circle, inner sep=1.5pt] (B') at (4,0) {};
    \node[fill=black, circle, inner sep=1.5pt] (C') at (4,1) {};
    \node[fill=black, circle, inner sep=1.5pt,label={[label distance=-0.05cm, above]:$u_{1}$}] (D') at (3,1) {};

    \node[fill=black, circle, inner sep=1.5pt] (w0) at (1.5, 0.5) {};
    \node[fill=black, circle, inner sep=1.5pt] (w1) at (2.5, 0.5) {};

    \draw (w0) -- (w1);
    \draw (C) -- (D) -- (A) -- (B);
    \draw (C') -- (D') -- (A') -- (B');
    \draw[blue, thick] (C) -- (w0);
    \draw[blue,thick] (w1) -- (A');
    \draw[color=red,dashed] (B) -- (A');
\end{tikzpicture}\hspace{25mm}
\begin{tikzpicture}
    \node[fill=black, circle, inner sep=1.5pt] (A) at (0,0) {};
    \node[fill=red, circle, inner sep=1.5pt] (B) at (1,0) {};
    \node[fill=black, circle, inner sep=1.5pt,label={[label distance=-0.05cm, above]:$u_{0}$}] (C) at (1,1) {};
    \node[fill=black, circle, inner sep=1.5pt] (D) at (0,1) {};

    \node[fill=black, circle, inner sep=1.5pt] (A') at (3,0) {};
    \node[fill=black, circle, inner sep=1.5pt] (B') at (4,0) {};
    \node[fill=black, circle, inner sep=1.5pt] (C') at (4,1) {};
    \node[fill=red, circle, inner sep=1.5pt,label={[label distance=-0.05cm, above]:$u_{1}$}] (D') at (3,1) {};

    \node[fill=black, circle, inner sep=1.5pt] (w0) at (1.5, 0.5) {};
    \node[fill=black, circle, inner sep=1.5pt] (w1) at (2.5, 0.5) {};

    \draw (w0) -- (w1);
    \draw (C) -- (D) -- (A) -- (B);
    \draw (A') -- (B') -- (C') -- (D');
    \draw[blue, thick] (C) -- (w0);
    \draw[blue,thick] (w1) -- (A');    \draw[red,dashed] (B) -- (D');
\end{tikzpicture}

\vspace*{5mm}
\begin{tikzpicture}
    \node[fill=black, circle, inner sep=1.5pt] (A) at (0,0) {};
    \node[fill=black, circle, inner sep=1.5pt] (B) at (1,0) {};
    \node[fill=red, circle, inner sep=1.5pt,label={[label distance=-0.05cm, above]:$u_{0}$}] (C) at (1,1) {};
    \node[fill=black, circle, inner sep=1.5pt] (D) at (0,1) {};

    \node[fill=red, circle, inner sep=1.5pt] (A') at (3,0) {};
    \node[fill=black, circle, inner sep=1.5pt] (B') at (4,0) {};
    \node[fill=black, circle, inner sep=1.5pt] (C') at (4,1) {};
    \node[fill=black, circle, inner sep=1.5pt,label={[label distance=-0.05cm, above]:$u_{1}$}] (D') at (3,1) {};

    \node[fill=black, circle, inner sep=1.5pt] (w0) at (1.5, 0.5) {};
    \node[fill=black, circle, inner sep=1.5pt] (w1) at (2.5, 0.5) {};

    \draw (w0) -- (w1);
    \draw (A) -- (B) -- (C) -- (D);
    \draw (C') -- (D') -- (A') -- (B');
    \draw[blue, thick] (C) -- (w0);
    \draw[blue,thick] (w1) -- (A');
    \draw[color=red,dashed] (C) to[bend left=45] (A');
\end{tikzpicture}\hspace{25mm}
\begin{tikzpicture}
    \node[fill=black, circle, inner sep=1.5pt] (A) at (0,0) {};
    \node[fill=black, circle, inner sep=1.5pt] (B) at (1,0) {};
    \node[fill=red, circle, inner sep=1.5pt,label={[label distance=-0.05cm, above]:$u_{0}$}] (C) at (1,1) {};
    \node[fill=black, circle, inner sep=1.5pt] (D) at (0,1) {};

    \node[fill=black, circle, inner sep=1.5pt] (A') at (3,0) {};
    \node[fill=black, circle, inner sep=1.5pt] (B') at (4,0) {};
    \node[fill=black, circle, inner sep=1.5pt] (C') at (4,1) {};
    \node[fill=red, circle, inner sep=1.5pt,label={[label distance=-0.05cm, above]:$u_{1}$}] (D') at (3,1) {};

    \node[fill=black, circle, inner sep=1.5pt] (w0) at (1.5, 0.5) {};
    \node[fill=black, circle, inner sep=1.5pt] (w1) at (2.5, 0.5) {};

    \draw (w0) -- (w1);
    \draw (D) -- (C) -- (B) -- (A);
    \draw (D') -- (C') -- (B') -- (A');
    \draw[blue, thick] (C) -- (w0);
    \draw[blue,thick] (w1) -- (A');    \draw[red,dashed] (C) -- (D');
\end{tikzpicture}
\caption{\bedit{The possible configurations for the two blocks adjacent to the envelope gadget, assuming $u_0w_0$ is in the tree. The top four figures show the possibilities when $w_1u_1$ is in the tree; the bottom four figures show the possibilities when $w_1v_1$ is in the tree. The odd nodes, $o_{0},o_{1}$ are highlighted in red, and in dashed red is the edge $M_1$ will add.}}
\label{fig:four_configs_gadget}
\end{figure}

\begin{enumerate}
    \item In Case 00, $o_{2j}=v_{2j}$ and $o_{2j+1}=u_{2j+1}$. Thus by adding the edge $(o_{2j},o_{2j+1})$ the block $2j-1$ becomes a circuit containing 5 vertices together with $u_{2j+1}$.
    \item In Case 01, $o_{2j}=v_{2j}$ and $o_{2j+1}=v_{2j+1}$. Therefore adding the edge $(o_{2j},o_{2j+1})$ creates a circuit of length 8 containing the two blocks $2j-1,2j+1$. 
    \item In Case 10, $o_{2j}=u_{2j}$ and $o_{2j+1}=u_{2j+1}$. Therefore adding the edge $(o_{2j},o_{2j+1})$ joins the blocks $2j-1$ and $2j+1$ with a doubled edge, i.e. creates the circuit $\{o_{2j},o_{2j+1}\}$
    \item In Case 11, $o_{2j}=u_{2j}$ and $o_{2j+1}=v_{2j+1}$. Therefore adding the edge $(o_{2j},o_{2j+1})$ creates a circuit of length 5 containing the block $2j+1$ and $u_{2j}$.
\end{enumerate}

\bedit{Analogously, we analyze the possible configurations for the two blocks that are adjacent the envelope gadget. All configurations are shown in \Cref{fig:four_configs_gadget}, assuming that $u_0w_0$ is included in the tree. (The configurations where $v_0w_0$ is included are symmetric.) From the diagrams, we observe that $w_0w_1$ is in a circuit of length 4, 7, or 10.}

\bedit{Therefore the resulting graph $T \cup M_1$ consists of a collection of circuits arranged in a circle. These circuits have length 2, 5, and 8, except for the circuit containing $w_0w_1$, which has length 4, 7, or 10. We will refer to the circuit containing $w_0w_1$ as the "special circuit" from now on.} Note that doubled edges (circuits of length 2) are never adjacent to one another on this circle, since they are only created between vertices $(o_i,o_{i+1})$ for even values of $i$. 

We now describe the problematic tours on such graphs—namely, Eulerian tours for which the resulting Hamiltonian cycle obtained via shortcutting is no cheaper. For intuition, the reader may first consider the bad tour shown in \Cref{fig:type1}. 
\bedit{
We begin by considering the special circuit. This circuit has two vertices of degree 4 -- one located counterclockwise from $w_0$, and the other clockwise. Let $t_0$ be the vertex in this circuit that lies counterclockwise from $w_0$, and designate it as the starting point of the tour. Note that every vertex in the tour has degree 2 or 4. Each degree 2 vertex is visited once. Each degree 4 vertex is visited twice, except for $t_0$ which is visited three times because the tour starts and ends at $t_0$.}

\bedit{We now describe the procedure for picking the next vertex $t_{k+1}$ given our current vertex $t_k$. 
If $t_k$ has degree 2, there is no choice to make, as only one edge remains. So it is sufficient to describe decisions on vertices $t_k$ of degree 4.
Note that since $t_k$ has degree 4, it is at the intersection of two adjacent circuits. Let  $C$ denote the circuit in the clockwise direction adjacent to $t_k$, and let $C'$ denote the circuit in the counterclockwise direction adjacent to $t_k$. 
The next edge to traverse is determined by the following rules, in order of priority:}
\begin{enumerate}
    \item \bedit{\textbf{If $t_k = t_0$, pick an arbitrary edge in $C$ that has not been traversed yet.} Note that if $t_k \neq t_0$ and $t_k$ is being visited for the second time, there is no choice to make as only one edge remains. Thus for the remaining rules we assume that $t_k \neq t_0$  and $t_k$ is a vertex of degree 4 being visited for the first time.}
    \item \textbf{Never traverse an edge in the counterclockwise direction.} Therefore, if $C$ is a doubled edge, we immediately traverse one of its edges.
    \item \textbf{Alternate the visited side of adjacent circuits.} \bedit{Otherwise, $t_k \neq t_0$ and $C$ has length 5 or 8.} For simplicity, suppose $t_k$ is on the outer ring; the case where $t_k$ is on the inner ring is symmetric. Let $e_{\mathrm{outer}} = \{t_k, u\}$ and $e_{\mathrm{inner}} = \{t_k, v\}$ be the two edges in $C$ adjacent to $t_k$, where $u$ is on the outer ring and $v$ is on the inner ring. Let $e = \{t_{j}, t_{j+1}\}$ be the previous edge in the tour that was 
    \bedit{part of a circuit of length 5 or 8.}
    Now, if $t_j$ is on the outer ring, we take $e_{\mathrm{inner}}$. Otherwise, take $e_{\mathrm{outer}}$. The intuition here is that if we visited the inner ring while traversing the last circuit of length greater than 2 \bedit{that was not the special circuit}, we now wish to visit the outer ring, and vice versa.
\end{enumerate}
    

We call the resulting Eulerian tours $B$-tours because they are bad for the objective function. 

Let $R=t_0,\dots,t_m$ be a $B$-tour and let \bedit{$h_0,\dots,h_{4k+1}, h_0$} be the order the vertices are visited in the Hamiltonian cycle resulting from shortcutting $R$. 

\begin{lemma}
    Shortcutting a $B$-tour on a graph $T \cup M_1$ to a Hamiltonian cycle does not reduce the cost by more than 9. 
\end{lemma}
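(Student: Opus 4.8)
The plan is to bound the total \emph{savings} from shortcutting, that is, the quantity $c(R) - c(h_0,\dots,h_{4k+1},h_0)$, where $c(R) = c(T \cup M_1)$ is the cost of the $B$-tour. Since shortcutting keeps only first occurrences, the only vertices removed are repeated visits to degree-$4$ vertices: each degree-$4$ vertex other than $t_0$ is visited exactly twice and skipped once, while $t_0$ is visited three times and skipped twice. Whenever a maximal run of already-visited vertices is skipped between two kept vertices $a$ and $b$, its contribution to the savings is (cost of the skipped path)$\, - c_{a,b} \ge 0$ by the triangle inequality. The whole proof thus reduces to controlling these per-skip contributions, and the key claim is that every skip away from the special circuit and from $t_0$ contributes exactly $0$. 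This must be an exact equality rather than a small bound: were any generic skip to save a positive amount, the total would grow linearly in $k$ rather than remaining bounded.

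First I would make the skips local. Because doubled edges (length-$2$ circuits) are never adjacent and every circuit has bounded length, each generic skip removes a single degree-$4$ junction $t$ (or, for a doubled edge, the two adjacent junctions handled by the second rule) sitting between its tour-predecessor $a$ and tour-successor $b$, so the contribution is $c_{a,t} + c_{t,b} - c_{a,b}$. The heart of the matter is to show this equals $0$, i.e. that $t$ lies on a shortest $a$--$b$ path in the graphic metric. This is exactly what the two rules defining a $B$-tour guarantee: ``never traverse counterclockwise'' fixes which pair of the four edges at $t$ is used on the second visit, and ``alternate the visited side'' forces the outgoing edge to leave on the ring \emph{opposite} to the one just used, so that $a$ and $b$ straddle $t$ along a geodesic. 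Concretely, for a junction such as $u_{2j+1}$ in a length-$5$ circuit one checks that the second visit enters from $v_{2j}$ and exits to $u_{2j+2}$, giving savings $2+1-3 = 0$, whereas the non-alternating choice would exit to $v_{2j+1}$ and save $2+1-1 = 2$, which the alternation rule precisely forbids.

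I would then carry out this verification over all the configurations enumerated in \Cref{fig:four_configs}: the four block orientations $00,01,10,11$, each in the two sub-cases according to whether the outer or inner ring edge joins the blocks, for circuit lengths $5$ and $8$, together with the length-$2$ doubled edges. In every case the computation is a short shortest-path distance calculation showing the skipped junction is collinear in the metric with its two tour-neighbours, hence zero savings. This uniform case check is the main obstacle: it is not deep, but it requires careful bookkeeping of the exact identity of $o_{2j},o_{2j+1}$, the ring each neighbour lies on, and the pairwise distances, across all local pictures, and it is precisely here that the design of the two tour rules must be seen to pay off.

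Finally I would account for the bounded remainder. The only places the generic collinearity can fail lie in a constant-size neighbourhood: the special circuit (of length $4$, $7$, or $10$, containing $w_0w_1$), the two blocks adjacent to the envelope gadget treated in \Cref{fig:four_configs_gadget}, and the start vertex $t_0$, which is skipped twice and at which the alternation pattern is reset. Each such exceptional skip contributes a savings bounded by a small constant, using the triangle inequality together with the fact that all relevant edges have cost at most a constant. Summing over this fixed set of exceptional skips yields a total savings of at most $9$, independent of $k$. Combining the zero generic contributions with this bounded boundary contribution establishes the lemma.
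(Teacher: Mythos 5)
Your proposal follows essentially the same route as the paper's proof: generic skips (those not involving the special circuit or $t_0$) save exactly zero because the alternation rule forces the skipped junction's tour-neighbours onto opposite rings, making the skipped path a geodesic; $\ell \le 2$ holds since doubled edges are never adjacent; and the exceptional skips near the special circuit, envelope gadget, and $t_0$ contribute a bounded total. The one place you are vaguer than the paper is the final accounting: the paper derives the constant $9$ explicitly as (at most $3$ exceptional skips) $\times$ (at most $3$ saved per skip, since $\ell \le 2$ and no two adjacent cost-$2$ edges are skipped), whereas you assert the total of $9$ without this breakdown.
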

\begin{proof}
A shortcut occurs when we are about to arrive at a vertex $t_i$ of degree 4 for the second time. Let $t_{i-1} = h_j$ for some $j$ and $h_{j+1} = t_{i+\ell}$ for some $\ell$. 

We first observe that $\ell \le 2$. This is because in the graph $T \cup M_1$, as observed above, there are no adjacent circuits of length 2. Therefore, there are no paths of vertices of degree 4 containing more than 2 vertices. 


We assume first that \bedit{$t_i$ and $t_{i+1}$ are not in the special circuit and} $t_{i+\ell} \not= t_0$. In this scenario, we show the shortcutting does not reduce the cost at all.  We argue about two cases:
\begin{enumerate}
    \item $\ell=1$, i.e. we only skip vertex $t_i$.
    
    First suppose that $c(t_{i-1},t_{i})+c(t_{i},t_{i+1})=2$. Then, skipping $t_i$ reduces the cost if and only if $c(t_{i-1}, t_{i+1}) = 1$.
    \bedit{This implies that all pairwise distances among $t_{i-1}$, $t_i$, and $t_{i+1}$ are equal to 1.  In the $k$-donut graph, the only such triples are the two triangles $\{u_0, v_0, w_0\}$ and $\{u_1, v_1, w_1\}$. However, this is impossible because the special circuit contains at least two vertices from each triple, and we assumed that $t_i$ and $t_{i+1}$ are not in the special circuit.}
    
    Therefore, it remains to deal with the case where $c(t_{i-1},t_{i})+c(t_{i},t_{i+1})=3$. First note that if $t_{i-1},t_{i+1}$ lie on different rings then $c(t_{i-1},t_{i+1})=3$ as desired. However, this must be the case, due to the fact that we always alternate the visited side of adjacent circuits. Indeed, since $\ell=1$, the circuit $C$ containing $t_{i}$ and $t_{i+1}$ and the circuit $C'$ containing $t_{i-1}$ and $t_i$  both have length greater than 2. Therefore, letting $j$ denote the index of the tour when $t_i$ was visited for the first time (so $j < i$ and $t_j = t_i$), we know from the rule that $t_{j-1}$ and $t_{j+1}$ are on different rings. This implies that $t_{i-1}$ and $t_{i+1}$ are also on different rings.
    See \Cref{fig:skip1} for an illustration of this case.

    \begin{figure}[ht]
        \centering

\begin{tikzpicture}
    \node[fill=black, circle, inner sep=1.5pt] (A) at (0,0) {};
    \node[fill=black, circle, inner sep=1.5pt,label={[blue,thick,below]:$t_{i-1}$}] (B) at (1,0) {};
    \node[fill=black, circle, inner sep=1.5pt,,label={[blue,thick,above]:$t_{i}$}] (C) at (1,1) {};
    \node[fill=black, circle, inner sep=1.5pt] (D) at (0,1) {};

    \node[fill=black, circle, inner sep=1.5pt,] (A') at (2,0) {};
    \node[fill=black, circle, inner sep=1.5pt] (B') at (3,0) {};
    \node[fill=black, circle, inner sep=1.5pt] (C') at (3,1) {};
    \node[fill=black, circle, inner sep=1.5pt,,label={[blue,thick]:$t_{i+1}$}] (D') at (2,1) {};

    \draw (A) -- (B) -- (C) -- (D);
	\draw (C) -- (D') -- (C');
	\draw (C) -- (A') -- (B');
    
    \draw[red,thick,<-,dotted] (C) to[bend right=20] (D);
    \draw[red,thick,->,dotted] (C) to[bend left=20] (A');
    \draw[red,thick,->,dotted] (A') to[bend left=20] (B');
    \draw[blue,thick,<-,dashed] (B) to[bend right=20] (A);
    \draw[blue,thick,->,dashed] (B) to (D');
    \draw[blue,thick,->,dashed] (D') to[bend right=20] (C');
\end{tikzpicture}\hspace{25mm}
\begin{tikzpicture}
    \node[fill=black, circle, inner sep=1.5pt] (A) at (0,0) {};
    \node[fill=black, circle, inner sep=1.5pt] (B) at (1,0) {};
    \node[fill=black, circle, inner sep=1.5pt,label={[blue,thick]:$t_{i-1}$}] (C) at (1,1) {};
    \node[fill=black, circle, inner sep=1.5pt] (D) at (0,1) {};

    \node[fill=black, circle, inner sep=1.5pt,label={[blue,thick,below]:$t_{i+1}$}] (A') at (2,0) {};
    \node[fill=black, circle, inner sep=1.5pt,] (B') at (3,0) {};
    \node[fill=black, circle, inner sep=1.5pt,] (C') at (3,1) {};
    \node[fill=black, circle, inner sep=1.5pt,label={[blue,thick]:$t_{i}$}] (D') at (2,1) {};

    \draw (C) -- (D);
    \draw (A) -- (B);
    \draw (C') -- (D') -- (A') -- (B');
    \draw (C) -- (D');
    \draw (B) -- (D');
    
    \draw[red,thick,->,dotted] (D') to[bend left=20] (C');
    \draw[red,thick,->,dotted] (A) to[bend left=20] (B);
    \draw[red,thick,->,dotted] (B) to[bend left=20] (D');
    
    \draw[blue,thick,->,dashed] (D) to[bend left=20] (C);
    \draw[blue,thick,->,dashed] (C) to (A');
    \draw[blue,thick,->,dashed] (A') to[bend left=20] (B');
\end{tikzpicture}

\vspace*{5mm}
\begin{tikzpicture}
    \node[fill=black, circle, inner sep=1.5pt] (A) at (0,0) {};
    \node[fill=black, circle, inner sep=1.5pt,label={[blue,thick,below]:$t_{i-1}$}] (B) at (1,0) {};
    \node[fill=black, circle, inner sep=1.5pt] (C) at (1,1) {};
    \node[fill=black, circle, inner sep=1.5pt] (D) at (0,1) {};

    \node[fill=black, circle, inner sep=1.5pt,] (A') at (2,0) {};
    \node[fill=black, circle, inner sep=1.5pt] (B') at (3,0) {};
    \node[fill=black, circle, inner sep=1.5pt,label={[blue,thick]:$t_{i+1}$}] (C') at (3,1) {};
    \node[fill=black, circle, inner sep=1.5pt,,label={[blue,thick]:$t_{i}$}] (D') at (2,1) {};

    \draw (C) -- (D);
    \draw[red,thick,<-,dotted] (C) to[bend right=20] (D);
    \draw[red,thick,<-,dotted] (D') to[bend right=20] (C);
    \draw[red,thick,->,dotted] (D') to[bend right=20] (A');
    \draw[red,thick,->,dotted] (A') to[bend right=20] (B');
    \draw[blue,thick,->,dashed] (A) to[bend left=20] (B);
    \draw[blue,thick,->,dashed] (B) to (C');
    \draw (A) -- (B);
    \draw (C') -- (D') -- (A') -- (B');
    \draw (C) -- (D');
    \draw (B) -- (D');
\end{tikzpicture}\hspace{25mm}
\begin{tikzpicture}
    \node[fill=black, circle, inner sep=1.5pt] (A) at (0,0) {};
    \node[fill=black, circle, inner sep=1.5pt] (B) at (1,0) {};
    \node[fill=black, circle, inner sep=1.5pt,label={[blue,thick]:$t_{i}$}] (C) at (1,1) {};
    \node[fill=black, circle, inner sep=1.5pt,,label={[blue,thick]:$t_{i-1}$}] (D) at (0,1) {};

    \node[fill=black, circle, inner sep=1.5pt,label={[blue,thick,below]:$t_{i+1}$}] (A') at (2,0) {};
    \node[fill=black, circle, inner sep=1.5pt,] (B') at (3,0) {};
    \node[fill=black, circle, inner sep=1.5pt,] (C') at (3,1) {};
    \node[fill=black, circle, inner sep=1.5pt] (D') at (2,1) {};

    \draw (A) -- (B) -- (C) -- (D);
	\draw (C) -- (D') -- (C');
	\draw (C) -- (A') -- (B');
    
    \draw[red,thick,->,dotted] (D') to[bend left=20] (C');
    \draw[red,thick,->,dotted] (C) to[bend left=20] (D');
    \draw[red,thick,->,dotted] (A) to[bend left=20] (B);
    \draw[red,thick,->,dotted] (B) to[bend left=20] (C);
    
    \draw[blue,thick,->,dashed] (D) to (A');
    \draw[blue,thick,->,dashed] (A') to[bend left=20] (B');
\end{tikzpicture}

        \caption{Case 1: Skipping one vertex. In dotted red is the first visit, in dashed blue the second. The two figures on the top illustrate the case where $c(t_{i-1}, t_i) + c(t_i, t_{i+1}) = 2$. The two figures on the bottom illustrate the case where $c(t_{i-1}, t_i) + c(t_i, t_{i+1}) = 3$.}
        \label{fig:skip1}
    \end{figure}

    \item $\ell=2$, i.e. we skip vertices $t_i,t_{i+1}$. In this case, \bedit{since $t_i$ and $t_{i+1}$ are not in the special circuit}, they must be connected by a doubled edge and are therefore on the same ring. Therefore, there are no edges of length 2 in the path $(t_{i-1},t_i,t_{i+1},t_{i+2})$. So, we need to prove that $c(t_{i-1},t_{i+2}) = c(t_{i-1},t_{i})+c(t_{i},t_{i+1})+c(t_{i+1},t_{i+2})=3$.  First note that if one of $t_{i-1},t_{i+2}$ lies on the inner ring and one lies on the outer ring then $c(t_{i-1},t_{i+2})=3$. However similar to above this must be the case. See \Cref{fig:skip2} for an illustration.
    
    \begin{figure}[ht]
        \centering
        \begin{tikzpicture}
    \node[fill=black, circle, inner sep=1.5pt] (A) at (0,0) {};
    \node[fill=black, circle, inner sep=1.5pt,label={[thick,blue,below]:$t_{i-1}$}] (B) at (1,0) {};
    \node[fill=black, circle, inner sep=1.5pt,,label={[thick,blue,above]:$t_{i}$}] (C) at (1,1) {};
    \node[fill=black, circle, inner sep=1.5pt] (D) at (0,1) {};

    \node[fill=black, circle, inner sep=1.5pt,] (A') at (2,0) {};
    \node[fill=black, circle, inner sep=1.5pt] (B') at (3,0) {};
    \node[fill=black, circle, inner sep=1.5pt,label={[thick,blue]:$t_{i+2}$}] (C') at (3,1) {};
    \node[fill=black, circle, inner sep=1.5pt,label={[thick,blue]:$t_{i+1}$}] (D') at (2,1) {};

    \draw (A) -- (B) -- (C) -- (D);
	\draw (B') -- (A') -- (D') -- (C');
	\draw (C) -- (D');
    
    \draw[thick,red,<-,dotted] (C) to[bend right=20] (D);
    \draw[thick,red,->,dotted] (C) to[bend left=20] (D');
    \draw[thick,red,->,dotted] (D') to[bend left=20] (A');
    \draw[thick,red,->,dotted] (A') to[bend left=20] (B');

    \draw[thick,blue,<-,dashed] (B) to[bend right=20] (A);
    \draw[thick,blue,->,dashed] (B) to (C');
\end{tikzpicture}\hspace{25mm}
\begin{tikzpicture}
    \node[fill=black, circle, inner sep=1.5pt] (A) at (0,0) {};
    \node[fill=black, circle, inner sep=1.5pt] (B) at (1,0) {};
    \node[fill=black, circle, inner sep=1.5pt,label={[thick,blue]:$t_{i}$}] (C) at (1,1) {};
    \node[fill=black, circle, inner sep=1.5pt,,label={[thick,blue]:$t_{i-1}$}] (D) at (0,1) {};

    \node[fill=black, circle, inner sep=1.5pt,label={[thick,blue,below]:$t_{i+2}$}] (A') at (2,0) {};
    \node[fill=black, circle, inner sep=1.5pt,] (B') at (3,0) {};
    \node[fill=black, circle, inner sep=1.5pt] (C') at (3,1) {};
    \node[fill=black, circle, inner sep=1.5pt,label={[thick,blue,above]:$t_{i+1}$}] (D') at (2,1) {};

    \draw (A) -- (B) -- (C) -- (D);
	\draw (B') -- (A') -- (D') -- (C');
	\draw (C) -- (D');
    
    \draw[thick,red,->,dotted] (D') to[bend left=20] (C');
    \draw[thick,red,->,dotted] (C) to[bend left=20] (D');
    \draw[thick,red,->,dotted] (A) to[bend left=20] (B);
    \draw[thick,red,->,dotted] (B) to[bend left=20] (C);
    
    \draw[thick,blue,->,dashed] (D) to (A');
    \draw[thick,blue,->,dashed] (A') to[bend left=20] (B');
\end{tikzpicture}

        \caption{Case 2: Skipping two vertices. In dotted red is the first visit, in dashed blue the second.}
        \label{fig:skip2}
    \end{figure}
\end{enumerate}

If \bedit{$t_{i+\ell} = t_0$, or either $t_i$ or $t_{i+1}$ is on the special circuit,} we allow for the possibility that the shortcutting succeeded in reducing cost. \bedit{Note that this type of shortcutting can happen at most 3 times. Since $\ell \le 2$ and there are no adjacent edges of length 2, the cost of the tour is reduced by at most 3 each time such a shortcutting occurs. Thus, the total reduction in the cost is at most 9.}
\end{proof}

The above lemma therefore demonstrates that the asymptotic cost of the tour after shortcutting in this manner is still $11/8$ times that of the optimal tour.


\subsection{Bad Tours on $M_2$}

We now consider the case where the matching being added is $M_2$. Recall $M_2 = \{(o_{2k-1},o_0),\allowbreak(o_1,o_2),\allowbreak\dots,\allowbreak(o_{2k-3},o_{2k-2})\}$. This indicates that the matching edges are added \textit{between vertices of the same block.} \bedit{In particular, no matching edges are added between two vertices in the envelope gadget.}
We assume that the block is of type 0 (as the other case is the same up to a symmetry) consisting of vertices $u_i,u_{i+1},v_i,v_{i+1}$. Since the block is type 0, the edges $(u_i,u_{i+1}),(v_i,v_{i+1}),(u_i,v_i)$ are those present in the block. Thus what matters is the edges chosen among the pairs $\{(u_{i-1},u_i),(v_{i-1},v_i)\}$ and $\{(u_{i+1},u_{i+2}),(v_{i+1},v_{i+2})\}$ (see \Cref{fig:m2config}):
\begin{enumerate}
    \item If $(u_{i-1},u_i)$ and 
$(u_{i+1},u_{i+2})$
    are chosen, then $u_i$ and $v_{i+1}$ are odd, and we create a triangle by adding $(u_i,v_{i+1})$ in the matching. Similarly, if $(v_{i-1},v_i)$ and 
    $(v_{i+1},v_{i+2})$
    are chosen, we get a triangle via adding the edge $(v_i,u_{i+1})$. 
    \item If $(u_{i-1},u_i)$ and 
    $(v_{i+1},v_{i+2})$
    are chosen, then $u_i$ and $u_{i+1}$ are odd, and thus we add a second edge $(u_i,u_{i+1})$. Similarly if $(v_{i-1},v_i)$ and 
    $(u_{i+1},u_{i+2})$
    are added, then we add a second edge $(v_i,v_{i+1})$. 
\end{enumerate}
\begin{figure}[htb!]
    \centering 
\begin{tikzpicture}
    \node[fill=red, circle, inner sep=1.5pt, label={[label distance=-0.15cm, below]:$v_{i}$}] (A) at (0,0) {};
    \node[fill=black, circle, inner sep=1.5pt, label={[label distance=-0.15cm, below]:$v_{i+1}$}] (B) at (1,0) {};
    \node[fill=red, circle, inner sep=1.5pt, label={[label distance=-0.05cm, above]:$u_{i+1}$}] (C) at (1,1) {};
    \node[fill=black, circle, inner sep=1.5pt, label={[label distance=-0.05cm, above]:$u_{i}$}] (D) at (0,1) {};

    \draw (C) -- (D) -- (A) -- (B);
    \draw[blue] (-1,0) -- (A);
    \draw[blue] (B) -- (2,0);
    \draw[red,dashed] (A) -- (C);
\end{tikzpicture}\hspace{25mm}
\begin{tikzpicture}
    \node[fill=red, circle, inner sep=1.5pt, label={[label distance=-0.15cm, below]:$v_{i}$}] (A) at (0,0) {};
    \node[fill=red, circle, inner sep=1.5pt, label={[label distance=-0.15cm, below]:$v_{i+1}$}] (B) at (1,0) {};
    \node[fill=black, circle, inner sep=1.5pt, label={[label distance=-0.05cm, above]:$u_{i+1}$}] (C) at (1,1) {};
    \node[fill=black, circle, inner sep=1.5pt, label={[label distance=-0.05cm, above]:$u_{i}$}] (D) at (0,1) {};

    \draw (C) -- (D) -- (A) -- (B);
    \draw[blue] (-1,0) -- (A);
    \draw[blue] (C) -- (2,1);
    \draw[red,dashed] (A) to[bend left=30] (B);
\end{tikzpicture}\vspace*{10mm}

\begin{tikzpicture}
    \node[fill=black, circle, inner sep=1.5pt, label={[label distance=-0.15cm, below]:$v_{i}$}] (A) at (0,0) {};
    \node[fill=black, circle, inner sep=1.5pt, label={[label distance=-0.15cm, below]:$v_{i+1}$}] (B) at (1,0) {};
    \node[fill=red, circle, inner sep=1.5pt, label={[label distance=-0.05cm, above]:$u_{i+1}$}] (C) at (1,1) {};
    \node[fill=red, circle, inner sep=1.5pt, label={[label distance=-0.05cm, above]:$u_{i}$}] (D) at (0,1) {};

    \draw (C) -- (D) -- (A) -- (B);
    \draw[blue] (-1,1) -- (D);
    \draw[blue] (B) -- (2,0);
    \draw[red,dashed] (D) to[bend right=30] (C);
\end{tikzpicture}\hspace{25mm}
\begin{tikzpicture}
    \node[fill=black, circle, inner sep=1.5pt, label={[label distance=-0.15cm, below]:$v_{i}$}] (A) at (0,0) {};
    \node[fill=red, circle, inner sep=1.5pt, label={[label distance=-0.15cm, below]:$v_{i+1}$}] (B) at (1,0) {};
    \node[fill=black, circle, inner sep=1.5pt, label={[label distance=-0.05cm, above]:$u_{i+1}$}] (C) at (1,1) {};
    \node[fill=red, circle, inner sep=1.5pt, label={[label distance=-0.05cm, above]:$u_{i}$}] (D) at (0,1) {};

    \draw (C) -- (D) -- (A) -- (B);
    \draw[blue] (-1,1) -- (D);
    \draw[blue] (1,1) -- (2,1);
    \draw[red,dashed] (D) -- (B);
\end{tikzpicture}
\caption{Four of the eight configurations corresponding to a block of type 0 (type 1 is the same up to a symmetry). The odd nodes, $o_{i},o_{i+1}$ are highlighted in red, and the matching edge is added in red.}
\label{fig:m2config}
\end{figure}

Therefore the structure of our graph is one large cycle with circuits of length 2 or 3 hanging off to create some vertices of degree 4 on the large cycle; see \Cref{fig:m2tour}.

\begin{figure}[htb!]
    \centering 
\begin{tikzpicture}[node distance=2cm, minimum size=4pt, every node/.style={circle, draw}, scale=0.9]
	\node[circle, inner sep=1pt] (node0) at (2.25, 0.0) {};
\node[circle, inner sep=1pt] (node16) at (3.0, 0.0) {};
\node[circle, inner sep=1pt] (node1) at (1.1250000000000004, 1.948557158514987) {};
\node[circle, inner sep=1pt] (node17) at (1.5000000000000004, 2.598076211353316) {};
\node[circle, inner sep=1pt] (node2) at (0.39070839975059346, 2.215817444277468) {};
\node[circle, inner sep=1pt] (node18) at (0.5209445330007912, 2.954423259036624) {};
\node[circle, inner sep=1pt] (node3) at (-0.3907083997505932, 2.215817444277468) {};
\node[circle, inner sep=1pt] (node19) at (-0.5209445330007909, 2.954423259036624) {};
\node[circle, inner sep=1pt] (node4) at (-1.1249999999999996, 1.948557158514987) {};
\node[circle, inner sep=1pt] (node20) at (-1.4999999999999996, 2.598076211353316) {};
\node[circle, inner sep=1pt] (node5) at (-1.7235999970177003, 1.4462721217947139) {};
\node[circle, inner sep=1pt] (node21) at (-2.2981333293569337, 1.9283628290596184) {};
\node[circle, inner sep=1pt] (node6) at (-2.1143083967682936, 0.769545322482755) {};
\node[circle, inner sep=1pt] (node22) at (-2.819077862357725, 1.0260604299770066) {};
\node[circle, inner sep=1pt] (node7) at (-2.25, 2.755455298081545e-16) {};
\node[circle, inner sep=1pt] (node23) at (-3.0, 3.6739403974420594e-16) {};
\node[circle, inner sep=1pt] (node8) at (-2.114308396768294, -0.7695453224827544) {};
\node[circle, inner sep=1pt] (node24) at (-2.8190778623577253, -1.026060429977006) {};
\node[circle, inner sep=1pt] (node9) at (-1.7235999970177005, -1.4462721217947134) {};
\node[circle, inner sep=1pt] (node25) at (-2.298133329356934, -1.9283628290596178) {};
\node[circle, inner sep=1pt] (node10) at (-1.1250000000000009, -1.9485571585149863) {};
\node[circle, inner sep=1pt] (node26) at (-1.5000000000000013, -2.598076211353315) {};
\node[circle, inner sep=1pt] (node11) at (-0.39070839975059324, -2.215817444277468) {};
\node[circle, inner sep=1pt] (node27) at (-0.520944533000791, -2.954423259036624) {};
\node[circle, inner sep=1pt] (node12) at (0.39070839975059246, -2.2158174442774685) {};
\node[circle, inner sep=1pt] (node28) at (0.5209445330007899, -2.9544232590366244) {};
\node[circle, inner sep=1pt] (node13) at (1.1250000000000004, -1.948557158514987) {};
\node[circle, inner sep=1pt] (node29) at (1.5000000000000004, -2.598076211353316) {};
\node[circle, inner sep=1pt] (node14) at (1.7235999970177, -1.4462721217947139) {};
\node[circle, inner sep=1pt] (node30) at (2.2981333293569333, -1.9283628290596186) {};
\node[circle, inner sep=1pt] (node15) at (2.114308396768293, -0.7695453224827563) {};
\node[circle, inner sep=1pt] (node31) at (2.8190778623577244, -1.0260604299770084) {};
\node[circle, inner sep=1pt, fill=black] (node32) at (2.4666931295630095, 0.8978028762298803) {};
\node[circle, inner sep=1pt, fill=black] (node33) at (2.0108666631873175, 1.6873174754271654) {};
\draw[color=black] (node1) -- (node2);
\draw[color=black] (node1) -- (node33);
\draw[color=black] (node2) -- (node18);
\draw[color=black] (node17) -- (node18);
\draw[color=red,dashed] (node17) to[bend left=30] (node18);
\draw[color=black] (node18) -- (node19);
\draw[color=black] (node3) -- (node4);
\draw[color=red,dashed] (node3) to[bend left=30] (node4);
\draw[color=black] (node4) -- (node20);
\draw[color=black] (node4) -- (node5);
\draw[color=black] (node19) -- (node20);
\draw[color=black] (node5) -- (node6);
\draw[color=black] (node5) -- (node21);
\draw[color=red, dashed] (node5) -- (node22);
\draw[color=black] (node6) -- (node7);
\draw[color=black] (node21) -- (node22);
\draw[color=black] (node7) -- (node8);
\draw[color=red,dashed] (node7) to[bend left=30] (node8);
\draw[color=black] (node7) -- (node23);
\draw[color=black] (node23) -- (node24);
\draw[color=black] (node24) -- (node25);
\draw[color=black] (node9) -- (node10);
\draw[color=black] (node9) -- (node25);
\draw[color=black] (node10) -- (node11);
\draw[color=black] (node25) -- (node26);
\draw[color=red,dashed] (node25) to[bend left=30] (node26);
\draw[color=black] (node11) -- (node12);
\draw[color=black] (node12) -- (node28);
\draw[color=black] (node27) -- (node28);
\draw[color=red,dashed] (node27) to[bend left=30] (node28);
\draw[color=black] (node28) -- (node29);
\draw[color=black] (node13) -- (node14);
\draw[color=black] (node13) -- (node29);
\draw[color=black] (node14) -- (node15);
\draw[color=black] (node29) -- (node30);
\draw[color=red,dashed] (node29) to[bend left=30] (node30);
\draw[color=black] (node15) -- (node0);
\draw[color=black] (node0) -- (node16);
\draw[color=black] (node31) -- (node16);
\draw[color=red,dashed] (node31) to[bend left=30] (node16);
\draw[color=black] (node16) -- (node32);
\draw[color=black] (node32) -- (node33);
\end{tikzpicture}\hspace*{10mm}
\begin{tikzpicture}[node distance=2cm, minimum size=14pt, every node/.style={circle, draw},scale=0.9]
\node[circle, inner sep=1pt] (node0) at (2.25, 0.0) {$ 6 $};
\node[circle, inner sep=1pt] (node16) at (3.0, 0.0) {$ 4 $};
\node[circle, inner sep=1pt] (node1) at (1.1250000000000004, 1.948557158514987) {$ 1 $};
\node[circle, inner sep=1pt] (node17) at (1.5000000000000004, 2.598076211353316) {$ 33 $};
\node[circle, inner sep=1pt] (node2) at (0.39070839975059346, 2.215817444277468) {$ 34 $};
\node[circle, inner sep=1pt] (node18) at (0.5209445330007912, 2.954423259036624) {$ 32 $};
\node[circle, inner sep=1pt] (node3) at (-0.3907083997505932, 2.215817444277468) {$ 29 $};
\node[circle, inner sep=1pt] (node19) at (-0.5209445330007909, 2.954423259036624) {$ 31 $};
\node[circle, inner sep=1pt] (node4) at (-1.1249999999999996, 1.948557158514987) {$ 28 $};
\node[circle, inner sep=1pt] (node20) at (-1.4999999999999996, 2.598076211353316) {$ 30 $};
\node[circle, inner sep=1pt] (node5) at (-1.7235999970177003, 1.4462721217947139) {$ 25 $};
\node[circle, inner sep=1pt] (node21) at (-2.2981333293569337, 1.9283628290596184) {$ 27 $};
\node[circle, inner sep=1pt] (node6) at (-2.1143083967682936, 0.769545322482755) {$ 24 $};
\node[circle, inner sep=1pt] (node22) at (-2.819077862357725, 1.0260604299770066) {$ 26 $};
\node[circle, inner sep=1pt] (node7) at (-2.25, 2.755455298081545e-16) {$ 22 $};
\node[circle, inner sep=1pt] (node23) at (-3.0, 3.6739403974420594e-16) {$ 21 $};
\node[circle, inner sep=1pt] (node8) at (-2.114308396768294, -0.7695453224827544) {$ 23 $};
\node[circle, inner sep=1pt] (node24) at (-2.8190778623577253, -1.026060429977006) {$ 20 $};
\node[circle, inner sep=1pt] (node9) at (-1.7235999970177005, -1.4462721217947134) {$ 17 $};
\node[circle, inner sep=1pt] (node25) at (-2.298133329356934, -1.9283628290596178) {$ 18 $};
\node[circle, inner sep=1pt] (node10) at (-1.1250000000000009, -1.9485571585149863) {$ 16 $};
\node[circle, inner sep=1pt] (node26) at (-1.5000000000000013, -2.598076211353315) {$ 19 $};
\node[circle, inner sep=1pt] (node11) at (-0.39070839975059324, -2.215817444277468) {$ 15 $};
\node[circle, inner sep=1pt] (node27) at (-0.520944533000791, -2.954423259036624) {$ 13 $};
\node[circle, inner sep=1pt] (node12) at (0.39070839975059246, -2.2158174442774685) {$ 14 $};
\node[circle, inner sep=1pt] (node28) at (0.5209445330007899, -2.9544232590366244) {$ 12 $};
\node[circle, inner sep=1pt] (node13) at (1.1250000000000004, -1.948557158514987) {$ 9 $};
\node[circle, inner sep=1pt] (node29) at (1.5000000000000004, -2.598076211353316) {$ 10 $};
\node[circle, inner sep=1pt] (node14) at (1.7235999970177, -1.4462721217947139) {$ 8 $};
\node[circle, inner sep=1pt] (node30) at (2.2981333293569333, -1.9283628290596186) {$ 11 $};
\node[circle, inner sep=1pt] (node15) at (2.114308396768293, -0.7695453224827563) {$ 7 $};
\node[circle, inner sep=1pt] (node31) at (2.8190778623577244, -1.0260604299770084) {$ 5 $};
\node[circle, inner sep=1pt] (node32) at (2.4666931295630095, 0.8978028762298803) {$ 3 $};
\node[circle, inner sep=1pt] (node33) at (2.0108666631873175, 1.6873174754271654) {$ 2 $};
\draw[color=black] (node1) -- (node2);
\draw[color=black] (node1) -- (node33);
\draw[color=black] (node2) -- (node18);
\draw[color=black] (node17) -- (node18);
\draw[color=red,dashed] (node17) to[bend left=30] (node18);
\draw[color=black] (node18) -- (node19);
\draw[color=black] (node3) -- (node4);
\draw[color=red,dashed] (node3) to[bend left=30] (node4);
\draw[color=black] (node4) -- (node20);
\draw[color=black] (node4) -- (node5);
\draw[color=black] (node19) -- (node20);
\draw[color=black] (node5) -- (node6);
\draw[color=black] (node5) -- (node21);
\draw[color=red, dashed] (node5) -- (node22);
\draw[color=black] (node6) -- (node7);
\draw[color=black] (node21) -- (node22);
\draw[color=black] (node7) -- (node8);
\draw[color=red,dashed] (node7) to[bend left=30] (node8);
\draw[color=black] (node7) -- (node23);
\draw[color=black] (node23) -- (node24);
\draw[color=black] (node24) -- (node25);
\draw[color=black] (node9) -- (node10);
\draw[color=black] (node9) -- (node25);
\draw[color=black] (node10) -- (node11);
\draw[color=black] (node25) -- (node26);
\draw[color=red,dashed] (node25) to[bend left=30] (node26);
\draw[color=black] (node11) -- (node12);
\draw[color=black] (node12) -- (node28);
\draw[color=black] (node27) -- (node28);
\draw[color=red,dashed] (node27) to[bend left=30] (node28);
\draw[color=black] (node28) -- (node29);
\draw[color=black] (node13) -- (node14);
\draw[color=black] (node13) -- (node29);
\draw[color=black] (node14) -- (node15);
\draw[color=black] (node29) -- (node30);
\draw[color=red,dashed] (node29) to[bend left=30] (node30);
\draw[color=black] (node15) -- (node0);
\draw[color=black] (node0) -- (node16);
\draw[color=black] (node31) -- (node16);
\draw[color=red,dashed] (node31) to[bend left=30] (node16);
\draw[color=black] (node16) -- (node32);
\draw[color=black] (node32) -- (node33);
\end{tikzpicture}
\caption{An example Eulerian graph when $M_2$ is added. The special nodes $w_0,w_1$ are marked in black. The graph consists of a single long cycle, onto which cycles of length two and three are grafted. As in the case of $M_1$, one can see that the length of this Hamiltonian cycle is equal to the length of the Eulerian tour that generated it.}
\label{fig:m2tour}
\end{figure}
We now describe $B$-tours in this instance. We will start at an arbitrary vertex $t_0$ on of degree 2 on the large cycle, and traverse an edge in clockwise direction.
As before, it suffices to dictate the rules for degree 4 vertices visited for the first time. The following is the only rule to produce a $B$-tour on graphs of this type: \textbf{traverse the adjacent edge in $M_2$}.
\begin{lemma}
    \bedit{Shortcutting a $B$-tour on a graph $T \cup M_2$ to a Hamiltonian cycle does not reduce the cost by more than 2.}
\end{lemma}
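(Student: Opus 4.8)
The plan is to account for the total decrease in cost incurred by all of the shortcutting operations and show it never exceeds $2$. As with the $M_1$ analysis, I would first record the relevant structure of $T\cup M_2$ (see \Cref{fig:m2tour}): every small circuit, a doubled edge or a triangle, is grafted onto the large cycle at a single degree-$4$ vertex; no two degree-$4$ vertices are adjacent on the large cycle; and the envelope gadget carries no matching edge, so $w_0,w_1$ (and the neighbours routing the cycle through them) sit on the large cycle with degree $2$. Consequently, each time the $B$-tour returns to a degree-$4$ attachment vertex $p$ for the second time, the shortcut skips exactly $p$: it deletes the two unit-cost edges $(a,p),(p,b)$, where $a$ is the last vertex of the detour through $p$'s circuit and $b$ is the big-cycle neighbour of $p$ on the far side, and replaces them by the single edge $(a,b)$. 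Hence each shortcut changes the cost by $2-c_{\{a,b\}}$, and since $c_{\{a,b\}}=\operatorname{dist}(a,b)\ge 1$ for distinct $a,b$, each shortcut reduces the cost by either $0$ or $1$, with a reduction of $1$ occurring precisely when $a$ and $b$ are adjacent in the $k$-donut.

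Next I would show that for every attachment vertex not adjacent to the envelope gadget, the endpoints $a,b$ lie at distance exactly $2$, so the reduction is $0$. This is a finite check over the configurations in \Cref{fig:m2config}, using the inner/outer symmetry. In the triangle case the skip reconnects a vertex such as $v_i$ on one ring to a vertex $u_{i+1}$ on the other ring; in the doubled-edge case it reconnects the pendant $u_{i+1}$ to a vertex $v_i$. In all of these the two endpoints lie on different rings and carry different indices, so no ring edge and no spoke joins them, giving $\operatorname{dist}(a,b)=2$ (for instance via $v_i-u_i-u_{i+1}$), and no cost is saved.

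The only way a shortcut can save cost is thus when $a$ and $b$ are adjacent, which by the above forces one endpoint to be $w_0$ or $w_1$ with the adjacency realized by an envelope edge $(w_0,u_0),(w_0,v_0),(w_1,u_1),(w_1,v_1)$. (The remaining graph edges $(u_0,u_1),(v_0,v_1)$ cannot be a shortcut edge: $u_0$ and $u_1$, and likewise $v_0$ and $v_1$, are separated on the large cycle by the degree-$2$ vertices $w_0,w_1$, so no single skip connects them.) I would then verify by enumerating the configurations of the two blocks flanking the gadget (blocks $2k-1$ and $1$) that such a saving genuinely occurs: for example, when block $2k-1$ is a $1$-block with $(w_0,u_0)$ and $(u_{2k-2},u_{2k-1})$ chosen, the attachment vertex is $u_0$, the detour is $u_0\to v_{2k-1}\to v_0\to u_0$, and the skip replaces $(v_0,u_0),(u_0,w_0)$ by the envelope edge $(v_0,w_0)$, saving exactly $1$.

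Finally I would bound the number of such savings. Because $w_0$ has exactly one big-cycle neighbour among $\{u_0,v_0\}$ (its other big-cycle edge is $(w_0,w_1)$) and is itself never skipped, at most one shortcut can have $w_0$ as an endpoint; the same holds for $w_1$. Hence at most two shortcuts reduce the cost, each by at most $1$, for a total reduction of at most $2$. The main obstacle is exactly this last case analysis around the envelope: since the gadget carries no matching edge and threads the large cycle through $w_0,w_1$, the local structure of blocks $2k-1$ and $1$ differs from that of the interior blocks, and one must check all of their configurations to confirm that each of $w_0,w_1$ contributes a saving of at most $1$ and that no single shortcut ever saves $2$.
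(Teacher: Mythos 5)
Your proof follows the same route as the paper's: every shortcut skips a single degree-$4$ attachment vertex, deleting two unit-cost edges and inserting one edge of cost at least $1$, so each shortcut saves at most $1$; a saving forces the three vertices involved to form a triangle of the $k$-donut, hence one of the two envelope triangles $\{u_0,v_0,w_0\}$, $\{u_1,v_1,w_1\}$; and each of $w_0,w_1$ accounts for at most one such saving, giving the bound of $2$. However, the structural claim you use to justify the key step --- that ``no two degree-$4$ vertices are adjacent on the large cycle'' --- is false, and this is a genuine gap in your justification that each shortcut skips exactly one vertex.

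Here is a counterexample. Let $i$ be an interior odd index, let block $i$ be a $1$-block and block $i+2$ a $0$-block, and suppose the tree contains the connecting edges $\{u_{i-1},u_i\}$, $\{u_{i+1},u_{i+2}\}$, and $\{u_{i+3},u_{i+4}\}$. Then in block $i$ the odd-degree vertices are $v_i$ and $u_{i+1}$, so $M_2$ adds $\{v_i,u_{i+1}\}$ and $u_{i+1}$ becomes a degree-$4$ attachment vertex; in block $i+2$ the odd-degree vertices are $u_{i+2}$ and $v_{i+3}$, so $M_2$ adds $\{u_{i+2},v_{i+3}\}$ and $u_{i+2}$ becomes a degree-$4$ attachment vertex. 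The vertices $u_{i+1}$ and $u_{i+2}$ are then adjacent on the large cycle. This is exactly why the paper claims only that no \emph{three} degree-$4$ vertices form a path (so a shortcut skips at most two vertices), and then rules out a two-vertex skip not structurally but via the $B$-tour rule: upon first reaching a degree-$4$ vertex the tour immediately traverses its small circuit and returns to it, so the vertex that follows any second visit is always being visited for the first time, and hence no shortcut can skip two vertices. Once you replace your adjacency claim with this tour-order argument, the rest of your proof (the two deleted edges have cost $1$ because the cost-$2$ matching edge is consumed inside the detour; savings only at the two envelope triangles; at most one skip adjacent to each of $w_0$, $w_1$) is correct and coincides with the paper's. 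One further small slip, which does not affect the bound: in the doubled-edge case where the tour exits toward $u_{i-1}$, the shortcut endpoints are $u_{i+1}$ and $u_{i-1}$, which lie on the \emph{same} ring rather than on different rings as you assert; their distance is still $2$, so the saving is still $0$.
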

\begin{proof}
   Let $R=t_0,t_1,\dots,t_m$ be a $B$-tour on $T \cup M_2$ which we shortcut to \bedit{$h_0,\dots,h_{4k+1}, h_0$}. A shortcut occurs when we are about to arrive at a vertex $t_i$ of degree 4 for the second time. Let $t_{i-1}=h_j$ for some $j$ and $h_{j+1} = t_{i+\ell}$ for some $\ell$.

   We first observe that there are no three vertices of degree 4 forming a path in $T \cup M_2$. Thus, $\ell$ is equal to either 1 or 2. We next observe that no edges of cost 2 are shortcut, because of the  rule producing a $B$-tour: whenever we visit a degree 4 vertex, we immediately traverse the edge of length 2 adjacent to it. Thus, any edges we shortcut are of length 1. \bedit{If $\ell=1$ and shortcutting reduces the cost, then as in the proof for $M_1$, we must have either $\{t_{i-1},t_i,t_{i+1}\} = \{u_0,v_0,w_0\}$ or $\{t_{i-1},t_i,t_{i+1}\} = \{u_1,v_1,w_1\}$. Such a shortcutting can occur at most twice, each time reducing the cost of the tour by at most 1.}
    
    So, it suffices to deal with the case where $\ell=2$, i.e. $t_{i-1}=h_j,t_{i+2}=h_{j+1}$. However this cannot occur, as our rule ensures that after visiting a degree 4 vertex in $t$ for the first time, we visit it again before visiting any other degree 4 vertex. Thus, we get a contradiction as we must not have visited $t_{i+1}$ yet. 
\end{proof}


\section{Conclusion}

We demonstrated that the max entropy algorithm as stated in e.g. \cite{OveisGharanSS11, KKO21} is not a candidate for a 4/3-approximation algorithm for the TSP. This raises the question: what might be a candidate algorithm?  The algorithm in \cite{JinKW23} is a 4/3-approximation for half-integral cycle cut instances of the TSP, which include $k$-donuts as a special case.
However, it is not clear if the algorithm can be extended to general TSP instances. One interesting direction is to find a modification of the max entropy algorithm which obtains a 4/3 or better approximation on $k$-donuts.

It would also be interesting to know whether one can obtain a  lower bound for the max entropy algorithm which is larger than 11/8.  While we have some  intuition based on \cite{KKO20,KKO21} for why the $k$-donuts are particularly problematic for the max entropy algorithm\footnote{The intuition is as follows. Say an edge $e=(u,v)$ is "good" if the probability that $u$ and $v$ have even degree in the sampled tree is at least a (small) constant and "bad" otherwise. One can show that the ratio (in terms of $x$ weight) of bad edges to good edges in the $k$-donut is as large as possible. For more details see \cite{KKO21}.}, it would be interesting to know if there are worse examples. 

\subsection*{Acknowledgments}

The first and third authors were supported in part by NSF grant CCF-2007009.  The first author was also supported by NSERC fellowship PGSD3-532673-2019. The second author was supported in part by NSF grants DMS-1926686, DGE-1762114, and CCF-1813135.

\printbibliography

\appendix
\section{Behavior of the Max Entropy Algorithm}\label{sec:appendix}

This appendix is structured as follows:
\begin{enumerate}
    \item First, we briefly prove that $x$ is an extreme point solution to \eqref{LP}. 
    \item In \cref{app:subsec:entropy} we demonstrate that sets $S$ with $x(E(S)) = |S|-1$ break the max entropy distribution into the product of two independent distributions.\footnote{Note that \cite{KKO21} showed this for $\lambda$-uniform distributions, however the max entropy distribution is not necessarily $\lambda$-uniform if the desired marginals $x$ is not a point in the relative interior of the spanning tree polytope.}
\end{enumerate}

\extremepoint*
\begin{proof}
    Suppose not. Then, there is a vector $c \in \mathbb{R}^{E}$ such that $x+c$ and $x-c$ are feasible solutions to \eqref{LP}. Since $x_e \le 1$ is a constraint, it must be that $c_e = 0$ for all edges $e$ with $x_e = 1$. Therefore, the support of $c$ is limited to a collection of squares and two triangles. The edges $e$ in any triangles must have $x_e = 0$ because the degree of each vertex must remain 2. 

    To maintain the degree of every vertex in one of the squares with edges $e,f,g,h$ (in order), it must be of the form $x_e = \epsilon$, $x_f = - \epsilon$, $x_g = \epsilon$, $x_h = -\epsilon$. For one of $c$ or $-c$ it must be that the edges of the square that lie on the inner ring or outer ring are decreased by $\epsilon$. WLOG let $f,h$ be these edges. However, this then violates the constraint $x(\delta(S)) \ge 2$ for the cut containing the edges $e,f$ and ${w_0,w_1}$, which is a contradiction. Therefore $x$ must be an extreme point. 
\end{proof}

\subsection{Max Entropy and Tight Sets}\label{app:subsec:entropy}

In this subsection, let $\mu: \Omega \to [0,1]$ be a probability distribution over a finite probability space $\Omega$. Then, the entropy of $\mu$ is defined as 
$$H(\mu) = -\sum_{x \in \Omega}\mu(x)\log(\mu(x))$$
It is well known that the entropy function is strictly concave, i.e. for any two distributions $\mu,\nu$ we have:
$$H(\lambda \mu + (1-\lambda)\nu) \ge \lambda H(\mu) + (1-\lambda)H(\mu)$$
for any $\lambda \in [0,1]$. Furthermore, this inequality is strict if $\mu \not= \nu$ and $0 < \lambda < 1$.  

This implies, for example, the uniqueness of a maximum entropy distribution. In particular, suppose $H(\mu) = H(\nu)$ and $\mu \not= \nu$. Then by the strict concavity of entropy, $H(\frac{1}{2}\mu + \frac{1}{2}\nu) > \frac{1}{2}H(\mu) + \frac{1}{2}H(\nu) = H(\mu)$, which is a contradiction.
\begin{lemma}\label{lem:entropy-independence}
Let $\mu$ be the max entropy distribution over spanning trees with marginals $x$. Let $S \subset V$ so that $x(E(S)) = |S|-1$. Then, $\mu = \mu_{G[S]} \times \mu_{G/S}$, where $\mu_{G[S]}$ is the max entropy distribution over trees in the induced graph $G[S]$ with marginals $x_{\mid E(S)}$ and similarly $\mu_{G/S}$ is the max entropy distribution over trees in the graph $G/S$ ($G$ with $S$ contracted to a single vertex) with marginals $x_{\mid E \smallsetminus E(S)}$. 
\end{lemma}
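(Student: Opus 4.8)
The plan is to pin down the support of $\mu$ combinatorially and then exploit subadditivity of entropy together with the uniqueness of the max entropy distribution (noted above via strict concavity). \textbf{Step 1 (support decomposition).} I would first show that every spanning tree $T$ in the support of $\mu$ satisfies $|T \cap E(S)| = |S|-1$. For any spanning tree $T$, the edges $T \cap E(S)$ form a forest in $G[S]$, so $|T \cap E(S)| \le |S|-1$. Taking expectations and using the marginal constraints together with the hypothesis, $\mathbb{E}_{\mu}[\,|T \cap E(S)|\,] = \sum_{e \in E(S)} x_e = x(E(S)) = |S|-1$. Since each realization is at most $|S|-1$ while the mean equals $|S|-1$, every tree in the support attains the bound. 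Hence $T_S := T \cap E(S)$ is a spanning tree of $G[S]$ and $T' := T \setminus E(S)$ is a spanning tree of $G/S$ (it has $n-|S|$ edges and is acyclic). Conversely, any such pair $(T_S, T')$ glues into a spanning tree of $G$, giving a bijection between the support of $\mu$ and pairs consisting of a spanning tree of $G[S]$ and a spanning tree of $G/S$.

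\textbf{Step 2 (the product is the max entropy distribution).} Let $\mu_1$ and $\mu_2$ be the marginal distributions of $\mu$ on the $T_S$- and $T'$-coordinates, respectively; by Step 1 these are genuine distributions over spanning trees of $G[S]$ and $G/S$. The product $\mu_1 \times \mu_2$ is supported on valid spanning trees of $G$ (by the gluing bijection) and has the same edge marginals $x$ as $\mu$: for $e \in E(S)$ we have $\mathbb{P}_{\mu_1 \times \mu_2}[e \in T] = \mathbb{P}_{\mu_1}[e \in T_S] = \mathbb{P}_{\mu}[e \in T] = x_e$, and symmetrically for $e \in E \setminus E(S)$. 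By the standard subadditivity of entropy, $H(\mu) \le H(\mu_1) + H(\mu_2) = H(\mu_1 \times \mu_2)$; by the maximality of $\mu$ among distributions with marginals $x$, $H(\mu_1 \times \mu_2) \le H(\mu)$. Therefore $H(\mu) = H(\mu_1 \times \mu_2)$, and uniqueness of the maximizer forces $\mu = \mu_1 \times \mu_2$.

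\textbf{Step 3 (identifying the factors).} It remains to show $\mu_1 = \mu_{G[S]}$ and $\mu_2 = \mu_{G/S}$, which I would do by an exchange argument. Suppose $\mu_1 \ne \mu_{G[S]}$. Both have marginals $x_{\mid E(S)}$, and $\mu_{G[S]}$ is the unique entropy maximizer over trees of $G[S]$ with those marginals, so $H(\mu_{G[S]}) > H(\mu_1)$. Then $\mu_{G[S]} \times \mu_2$ has marginals $x$ and entropy $H(\mu_{G[S]}) + H(\mu_2) > H(\mu_1) + H(\mu_2) = H(\mu)$, contradicting the maximality of $\mu$. Hence $\mu_1 = \mu_{G[S]}$, and the symmetric argument gives $\mu_2 = \mu_{G/S}$, completing the proof.

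\textbf{Main obstacle.} The crux is Step 1: the tight-set hypothesis $x(E(S)) = |S|-1$ is precisely what rigidifies the support so that it factors through the bijection, and without it the coordinate marginals $\mu_1,\mu_2$ would not even be supported on spanning trees of $G[S]$ and $G/S$. Once that is established, Steps 2 and 3 are routine consequences of subadditivity, additivity of entropy over products, and uniqueness of the maximizer. The only bookkeeping I would be careful about is verifying that each auxiliary distribution used in the exchange arguments ($\mu_1 \times \mu_2$ and $\mu_{G[S]} \times \mu_2$) genuinely lies in the feasible region of the max entropy program for $G$ with marginals $x$; this is exactly guaranteed by the gluing bijection of Step 1.
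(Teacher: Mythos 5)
Your proof is correct and follows essentially the same route as the paper's: rigidify the support via the expectation argument ($|T \cap E(S)| \le |S|-1$ always, with mean $|S|-1$), form the product of the coordinate marginals, and invoke subadditivity of entropy plus uniqueness of the maximizer. Your Step 3 exchange argument just makes explicit what the paper states more tersely (that maximizing $H(\mu^*)$ decouples into maximizing each factor), so the two proofs coincide in substance.
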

\begin{proof}
	Given a tree $T$ in the support of $\mu$, let $T = T_1 \cup T_2$, where $T_1 = T \cap E(S)$ and $T_2 = T \cap (E\smallsetminus E(S))$. First, notice that if $T$ is in the support of $\mu$, $T_1$ and $T_2$ are spanning trees in $G[S]$ and $G/S$ respectively. This is because, by assumption, $\mathbb{E}_{\mu}
    |T \cap E(S)| = x(E(S)) = |S|-1$, but $|T \cap E(S)| \le |S|-1$ with probability 1 since we sample a tree. So we must have $|T \cap E(S)| = |S|-1$ with probability 1.
		
	Now consider $\mu^* = \mu_{\mid E(S)} \times \mu_{\mid E \smallsetminus E(S)}$. $\mu^*$ clearly has marginals $x$. Second, by the discussion above, it is a distribution over spanning trees, since as argued every element in the support of $\mu_{\mid E(S)}$ is a spanning tree of $G[S]$ and every tree in the support of $\mu_{\mid E \smallsetminus E(S)}$ is a spanning tree in $G/S$. Thus, $\mu^*$ is a candidate for the max entropy distribution.
	By independence, we have 
	\begin{equation}\label{eq:entropy_mu_star}
		H(\mu^*) = H(\mu_{\mid E(S)}) + H(\mu_{\mid E \smallsetminus E(S)}) \ge H(\mu)
	\end{equation}
	where in the inequality we used the sub-additivity of entropy.
	Thus, by the uniqueness of the max entropy distribution it must be that $\mu = \mu^* = \mu_{\mid E(S)} \times \mu_{\mid E \smallsetminus E(S)}$. 
	
	Finally, by \cref{eq:entropy_mu_star}, notice that to maximize the entropy of $\mu = \mu^*$, it is equivalent to independently maximize the entropy of $\mu_{\mid E(S)}$ and $\mu_{\mid E \smallsetminus E(S)}$ subject to the constraints that they have marginals $x_{\mid E(S)}$ and $x_{\mid E \smallsetminus E(S)}$ respectively, proving the lemma. 
\end{proof}

\begin{lemma}
    \label{lem:max_entropy_kdonut}
    Algorithm \ref{alg} is the max entropy algorithm applied to $k$-donut instances. 
\end{lemma}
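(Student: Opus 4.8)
The plan is to prove that the max entropy distribution $\mu$ over spanning trees of $G - e^+$, with marginals $x$ restricted to $E \setminus \{e^+\}$, is exactly the product distribution in which all of the independent fair choices of Algorithm~\ref{alg} are made. Since the algorithm adds $e^+$ deterministically, and every edge $e$ with $x_e = 1$ has marginal $1$ and so lies in every tree of the support, this correspondence is all that is needed. The engine of the argument is \Cref{lem:entropy-independence}: whenever a vertex set $S$ satisfies $x(E(S)) = |S|-1$, the distribution factors as $\mu = \mu_{G[S]} \times \mu_{G/S}$.

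First I would exhibit the relevant tight sets. For each odd $i$, let $B_i = \{u_i, v_i, u_{i+1}, v_{i+1}\}$. Its internal edges are the two value-$1$ ring edges $\{u_i,u_{i+1}\}$, $\{v_i,v_{i+1}\}$ together with the two value-$\tfrac12$ spokes $\{u_i,v_i\}$, $\{u_{i+1},v_{i+1}\}$, so $x(E(B_i)) = 3 = |B_i|-1$ and $B_i$ is tight. The $k$ blocks are vertex-disjoint and cover all of $\{u_0,\dots,u_{2k-1},v_0,\dots,v_{2k-1}\}$. I would apply \Cref{lem:entropy-independence} once per block, contracting each $B_i$ to a single vertex; since the blocks are disjoint and contracting one block touches none of the internal edges of another, the remaining blocks stay tight with unchanged marginals, so the lemma applies at each step. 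This yields $\mu = \big(\prod_{i \text{ odd}} \mu_{G[B_i]}\big) \times \mu_{G'}$, where $G'$ is $G-e^+$ with every block contracted.

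Next I would read off each factor. On $G[B_i]$ the two value-$1$ edges lie in every spanning tree, leaving two components $\{u_i,u_{i+1}\}$ and $\{v_i,v_{i+1}\}$; connecting them into a spanning tree requires exactly one of the two spokes, and the marginal constraint forces each spoke with probability $\tfrac12$. Hence $G[B_i]$ has exactly two spanning trees, each of probability $\tfrac12$, which is precisely the forced ring edges and the uniform spoke choice made by Algorithm~\ref{alg}. For the contracted graph $G'$, its vertices are the $k$ block-vertices together with $w_0,w_1$, and consecutive vertices are joined by a pair of parallel value-$\tfrac12$ edges: these come from the even ring edges $\{u_i,u_{i+1}\},\{v_i,v_{i+1}\}$ for even $i\neq 0$, and from the envelope edges $\{w_0,u_0\},\{w_0,v_0\}$ and $\{w_1,u_1\},\{w_1,v_1\}$. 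Crucially, because $e^+$ was removed, $G'$ is a \emph{path} from $w_1$ to $w_0$ rather than a cycle. Each of its $k+1$ parallel-edge "bonds" is then a $2$-edge cut that every spanning tree crosses exactly once, and the prefix sets $S_j$ consisting of the first $j+1$ vertices along the path satisfy $x(E(S_j)) = j = |S_j|-1$. Iterating \Cref{lem:entropy-independence} on these tight prefixes splits $\mu_{G'}$ into independent uniform choices, one per bond, matching the remaining choices of Algorithm~\ref{alg}.

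Assembling the factors shows $\mu$ is exactly the product distribution sampled by Algorithm~\ref{alg}, and adding the forced edge $e^+$ completes the correspondence. The main obstacle is the bookkeeping that legitimizes the \emph{iterated} use of \Cref{lem:entropy-independence}: one must verify that contracting a tight set leaves the remaining candidate sets tight with unchanged marginals, and—most importantly—that deleting $e^+$ converts the contracted graph into a path, since it is precisely the absence of the closing cycle edge that forces every bond to contribute exactly one tree edge and thereby makes each within-bond decision an independent fair coin. The remaining ingredient, that a max entropy distribution over two equal-marginal options is the uniform one, is immediate: with exactly two spanning trees in the support, the marginal constraints already pin both probabilities to $\tfrac12$, consistent with the uniqueness of the max entropy distribution established before \Cref{lem:entropy-independence}.
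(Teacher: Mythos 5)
Your proposal is correct and follows essentially the same route as the paper: identify the tight blocks $\{u_i,v_i,u_{i+1},v_{i+1}\}$ for odd $i$, invoke \Cref{lem:entropy-independence} to factor the max entropy distribution into per-block factors times the distribution on the contracted graph, and observe that the contracted graph is a path of parallel value-$\tfrac12$ edge pairs whose bonds are sampled independently and uniformly. Your treatment is somewhat more explicit than the paper's in two places it glosses over—the iteration of \Cref{lem:entropy-independence} block by block, and the justification (via tight prefix sets) that the bond choices in the contracted path are independent—but these are refinements of the same argument, not a different one.
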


\begin{proof}
    We observe that because the maximum entropy algorithm matches the marginals on edges exactly, for any spanning tree $X$ drawn from the distribution, it must contain each edge $e$ which which $x_e=1$. Thus from our $k$-donut instances, we must include the edges $\{u_i, u_{i+1}\}$ and $\{v_i, v_{i+1}\}$ with $i$ odd. Furthermore, for any set $S$ such that $x(E(S))=|S|-1$, the algorithm must select exactly $|S|-1$ edges from $E(S)$. This is because if it selects fewer than $|S|-1$ edges sometimes, it must select more than $|S|-1$ edges at other times, which creates a cycle on the vertices of $S$. For each odd $i$, if we consider the set $S_i= \{u_i,v_i,u_{i+1},v_{i+1}\}$, then $x(E(S_i))=3=|S_i|-1$. Thus, it must be the case that we choose exactly one edge from the pair $\{\{u_i,v_i\}, \{ u_{i+1}, v_{i+1}\}\}$ to add to $T$, and since these two edges have LP value 1/2, we must choose exactly one of them independently and uniformly at random.  
    
    So far, we have described the max entropy distribution on $G[S_i]$ for all odd $i$. By \Cref{lem:entropy-independence}, the max entropy distribution on the entire graph is given by 
    $$\mu = \prod_{\substack{1 \le i \le 2k-1 \\ i~\text{odd}}} \mu_{G[S_i]} \times \mu_{G'},$$
    where $G' := G \setminus S_1 \setminus S_3 \setminus \cdots \setminus S_{2k-1}$. Note that $G'$ is simply a graph with $k+2$ vertices, where for each odd $i$ the four vertices $\{u_i,v_i,u_{i+1},v_{i+1}\}$ have been contracted into a single vertex $s_i$. The vertices are arranged in a line $w_1, s_1, s_3, \ldots, s_{2k-1}, w_0$, and between every pair of adjacent vertices are two parallel edges of LP value $\frac12$. Therefore, the max entropy distribution $\mu_{G'}$ simply samples one edge between each pair of adjacent vertices, independently and uniformly at random. In the original graph, this corresponds to sampling one edge from $\{\{u_i,u_{i+1}\}, \{v_i,v_{i+1}\}\}$ for every even $i$, one edge from $\{\{w_0, u_0\}, \{w_0, v_0\}\}$, and one edge from $\{\{w_1, u_1\}, \{w_1, v_1\}\}$. 

\end{proof}

\end{document}